\definecolor{blue}{RGB}{10, 10, 200}
\newcommand{\bheading}[1]{{\vspace{4pt}\noindent{\textbf{#1}}}}
\newcommand{\iheading}[1]{{\vspace{2pt}\noindent{\textit{#1}}}}
\newcolumntype{?}{!{\vrule width 1pt}}
\newcounter{note}[section]
\newcommand{\secref}[1]{\mbox{Sec.~\ref{#1}}\xspace}
\newcommand{\figref}[1]{\mbox{Fig.~\ref{#1}}}
\newcommand{\ignore}[1]{}
\newcommand{\ie}{\textit{i.e.}\xspace}
\newcommand{\eg}{\textit{e.g.}\xspace}
\newcommand{\etal}{\textit{et al.}\xspace}
\newcommand{\sysname}{\textsc{Hydra}\xspace}
\newcounter{packednmbr}
\newenvironment{packeditemize}{
\begin{list}{$\bullet$}{
\setlength{\labelwidth}{0pt}
\setlength{\itemsep}{2pt}
\setlength{\leftmargin}{\labelwidth}
\addtolength{\leftmargin}{\labelsep}
\setlength{\parindent}{0pt}
\setlength{\listparindent}{\parindent}
\setlength{\parsep}{1pt}
\setlength{\topsep}{1pt}}}{\end{list}}
\newtheorem{theorem}{Theorem}
\newtheorem{lemma}{Lemma}
\def\BibTeX{{\rm B\kern-.05em{\sc i\kern-.025em b}\kern-.08em
    T\kern-.1667em\lower.7ex\hbox{E}\kern-.125emX}}
\begin{document}

\title{\sysname: Breaking the Global Ordering Barrier in Multi-BFT Consensus}

\author{
	\IEEEauthorblockN{
		Hanzheng Lyu\IEEEauthorrefmark{1}, 
		Shaokang Xie\IEEEauthorrefmark{2}, 
		Jianyu Niu\IEEEauthorrefmark{3}, 
		Mohammad Sadoghi\IEEEauthorrefmark{2}, \\
        Yinqian Zhang\IEEEauthorrefmark{4}, 
        Cong Wang\IEEEauthorrefmark{3},
        Ivan Beschastnikh\IEEEauthorrefmark{5}, 
        Chen Feng\IEEEauthorrefmark{1} 
} 
\IEEEauthorblockA{University of British Columbia (\IEEEauthorrefmark{1}Okanagan Campus, \IEEEauthorrefmark{5}Vancouver Campus
), \IEEEauthorrefmark{2}University of California, Davis, \\ \IEEEauthorrefmark{3}City University of Hong Kong, \IEEEauthorrefmark{4}Southern University of Science and Technology}

\IEEEauthorblockA{\IEEEauthorrefmark{1}\{lyuhanzheng@gmail.com, chen.feng@ubc.ca\}, \IEEEauthorrefmark{2}\{skxie,msadoghi\}@ucdavis.edu} \IEEEauthorrefmark{3}\{njianyu@gmail.com, congwang@cityu.edu.hk\}, \IEEEauthorrefmark{4}yinqianz@acm.org,
\IEEEauthorrefmark{5}bestchai@cs.ubc.ca

}

\maketitle

\begin{abstract}
Multi-Byzantine Fault Tolerant (Multi-BFT) consensus, which runs multiple BFT instances in parallel, has recently emerged as a promising approach to overcome the leader bottleneck in classical BFT protocols. However, existing designs rely on a global ordering layer to serialize blocks across instances, an intuitive yet costly mechanism that constrains scalability, amplifies failure propagation,  and complicates deployment. In this paper, we challenge this conventional wisdom. We present \sysname, the first Multi-BFT consensus framework that eliminates global ordering altogether. \sysname introduces an object-centric execution model that partitions transactions by their accessed objects, enabling concurrent yet deterministic execution across instances. To ensure consistency, \sysname combines lightweight lock-based coordination with a deadlock resolution mechanism, achieving both scalability and correctness.
We implement \sysname and evaluate it on up to 128 replicas in both LAN and WAN environments. Experimental results show \sysname outperforms several state-of-the-art Multi-BFT protocols in the presence of a straggler. These results demonstrate strong consistency and high performance by removing global ordering, opening a new direction toward scalable Multi-BFT consensus design.
\end{abstract}


\section{Introduction} \label{sec:intro} 
Byzantine Fault Tolerant (BFT) consensus is a cornerstone of modern decentralized systems, powering diverse applications such as blockchains~\cite{9543565,QuestMarko,lens,gupta2019proof}, decentralized finance (DeFi)~\cite{wang2022bft}, and decentralized storage~\cite{zhu2020blockchain}.
Most classical BFT protocols, including PBFT~\cite{pbft1999} and Zyzzyva~\cite{kotla2007zyzzyva}, adopt a leader-based architecture, where a designated replica (also called the leader) proposes transactions 
and coordinates with others to reach agreement.
However, as system scale grows, this traditional leader-based design becomes a key performance bottleneck~\cite{gai2021dissecting, stathakopoulou2019mir, avarikioti2020fnf, stathakopoulou2022state, gupta2021rcc,amiri2024bedrock}: the leader’s coordination workload increases linearly with the number of replicas, making it the dominant factor limiting throughput and latency.

To address this scalability bottleneck, Multi-BFT consensus has emerged as a promising direction~\cite{stathakopoulou2019mir, avarikioti2020fnf, stathakopoulou2022state, gupta2021rcc, MIR-BFT}. Multi-BFT runs multiple leader-based consensus instances (\eg, PBFT) in parallel, as illustrated in \figref{fig:paradigm}. Client transactions are partitioned into disjoint buckets, each handled by a separate instance, enabling concurrent agreement among multiple leaders.
Each instance commits its own blocks locally, which are then merged through a global ordering phase to form a single, totally ordered ledger. This design significantly increases throughput by better utilizing available bandwidth and computational capacity.

\begin{figure}[t]
	\centering
    \includegraphics[width=0.85\linewidth]{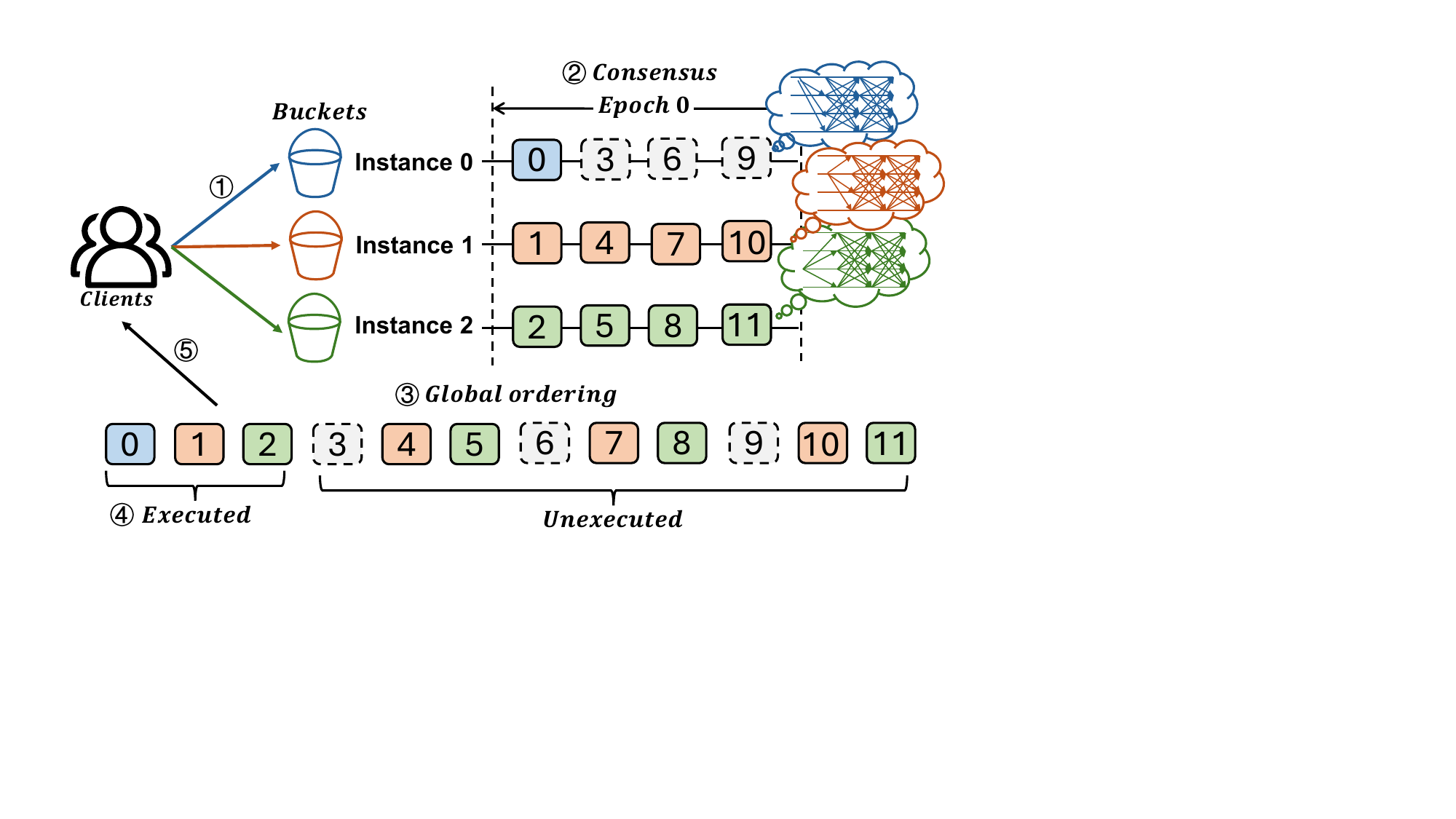}
	\caption{Multi-BFT consensus paradigm. Global ordering is the core of ordering blocks across instances.}
	\label{fig:paradigm}
    \vspace{-5mm}
\end{figure}

Despite its benefits, the global ordering layer--used to serialize blocks across instances--remains a major scalability barrier. While intuitively necessary for consistency, 
it introduces significant latency, particularly in the presence of slow instances. In practice, a single slow or faulty instance can stall the merging process, delaying confirmations for otherwise fast replicas. For example, in ISS~\cite{stathakopoulou2022state}, a state-of-the-art Multi-BFT protocol, the global ordering phase alone accounts for up to 92.8\% of total transaction latency with just one straggler among 16 replicas. Although recent optimizations such as Ladon~\cite{Ladon2025} and Orthrus~\cite{Orthrus2025} mitigate straggler effects, global ordering still dominates latency, contributing over 70\% of total delay in Orthrus. See more details of these limitations in \secref{sec:background}. 

Global ordering also introduces execution inefficiency. Once blocks are serialized, replicas typically execute transactions sequentially according to the global log. Yet, many workloads contain semantically independent transactions that operate on disjoint objects. Hence, these computation workloads can be safely executed in parallel. This creates a double serialization barrier: 1) the ordering phase serializes instance outputs; and then 2) sequential execution re-serializes independent transactions. As a result, current Multi-BFT systems fail to exploit inherent transaction-level parallelism.

These limitations motivate a central question:
\textit{Can Multi-BFT consensus preserve safety and deterministic execution without enforcing a costly global order?}

In this paper, we propose \sysname, the first Multi-BFT consensus framework that eliminates the need for global ordering. Instead of enforcing a total transaction order and executing afterward, \sysname unifies ordering and execution by leveraging transaction dependencies to expose concurrency while maintaining consistency.
The key insight is that transactions operate on objects, which are independently updatable state items (\eg, accounts or contract variables).
As long as operations on the same object are executed sequentially, the correctness of the overall system state is preserved. Unlike sharding---where {replicas} maintain only a subset of objects and must coordinate across partitions---each replica in \sysname retains a global view of all objects, enabling local coordination without expensive cross-instance protocols.

To realize this model, \sysname partitions the global object space into disjoint groups, each managed by a dedicated instance responsible for ordering and committing transactions on its assigned objects. Transactions are classified into two types: intra-instance transactions, which affect objects within a single group, and cross-instance transactions, which span multiple groups. Replicas execute instances concurrently: intra-instance transactions are executed locally, whereas cross-instance transactions are carried out by coordinating their sub-operations across relevant instances. This object partitioning introduces an new atomicity challenge: sub-operations distributed across instances must either all commit or all abort for consistency. 

\sysname addresses this challenge using a locking-based coordination mechanism: a transaction is prohibited from accessing objects involved in another in-flight transaction until the latter completes. Each transaction must acquire locks on all relevant objects before execution, guaranteeing that no intermediate state is exposed and that transaction outcomes remain atomic.
However, locking introduces the potential for deadlocks. 
Different instances may acquire locks in different orders for the same cross-instance transaction set. Such inconsistent lock acquisition orders can create cyclic waiting dependencies, preventing involved transactions from making progress.

To resolve this, \sysname implements a deadlock detection and resolution mechanism. When a transaction is blocked during lock acquisition, the system expands a distributed deadlock group by tracing dependency edges from all involved instances. If a dependency cycle is detected, \sysname deterministically aborts one or more transactions in the group—ensuring that all instances resolve the conflict consistently and the system continues to make progress.

We build an end-to-end prototype of \sysname in Go~\cite{golang} and conduct extensive experiments on AWS to evaluate its performance. 
We compare \sysname with the state-of-the-art Multi-BFT systems, including ISS~\cite{stathakopoulou2022state}, RCC~\cite{gupta2021rcc}, Mir-BFT~\cite{MIR-BFT}, DQBFT~\cite{dqbft}, Ladon~\cite{Ladon2025} and Orthrus~\cite{Orthrus2025}, in terms of throughput and latency.

\bheading{Contributions.} The main contributions are as follows.

\begin{packeditemize}
 
\item We propose \sysname, the first Multi-BFT consensus framework that eliminates the need for global ordering. \sysname significantly improves throughput and reduces end-to-end latency, especially in the presence of slow instances. 
 
\item We design a concurrent execution model for Byzantine environments, allowing replicas to execute transactions locally while preserving deterministic consistency. This model leverages lock-based execution and distributed deadlock detection to enable high-throughput parallel processing without violating correctness.
 
\item We build a prototype of \sysname and evaluate its performance in both WAN and LAN environments. Experimental results show substantial throughput improvement, with up to 9.0× higher throughput in WAN deployments and 7.4× higher throughput in LAN settings, compared to existing Multi-BFT protocols in the presence of a straggler.
\end{packeditemize}


\section{Motivations}\label{sec:background} 
\subsection{Anatomy of Multi-BFT Consensus} 
A Multi-BFT consensus system runs multiple BFT instances in parallel to improve throughput. Each transaction follows a typical five-stage process, as shown in \figref{fig:paradigm}: \ding{172} A client submits its transaction to replicas;  \ding{173} Replicas reach instance-level agreement to commit transactions using a BFT protocol; \ding{174} Replicas order committed transactions across instances into a global ledger; \ding{175} Replicas execute the ordered transactions; \ding{176} Replicas return results to the client.

The end-to-end latency and throughput are jointly determined by the following components: the \textit{transmission phase} (\ding{172} and \ding{176}), which is dominated by
network delay; the \textit{consensus phase} (\ding{173}), which depends on the underlying BFT protocol; and the \textit{global ordering phase} (\ding{174}), which is determined by the global ordering algorithm; and finally, the \textit{execution phase} (\ding{175}), whose performance depends on the system’s execution model, \ie, whether transactions are executed serially, speculatively, or in parallel based on their dependencies.

Among these components, the first two are well-studied and highly optimized in modern BFT systems, and such optimizations are largely orthogonal to the parallelization benefits brought by Multi-BFT designs. Thus, the major bottlenecks and opportunities for improvement lie in:
{(i) reducing or eliminating the global ordering cost}, and
{(ii) enabling highly parallel execution}. 
However, in practice, these two components fundamentally limit the scalability of Multi-BFT systems. We now examine these scalability challenges.

\subsection{Analyzing Scalability Challenges}
\bheading{Global ordering limits scalability.}
While parallel instances can improve consensus throughput, the system must still serialize their results through a global ordering phase to ensure consistent execution. This stage becomes the dominant bottleneck as it couples the progress of all instances: a single delayed or crashed instance can block the entire system from advancing. 
Specifically, when one instance lags behind others, the system cannot finalize subsequent transactions from faster instances because their assigned global indices remain unfilled. 

To understand the above limitation, \figref{fig:paradigm} illustrates a concrete case, in which instance 0 significantly lags behind the other instances: while the others have already produced four blocks, instance 0 has only produced one. This delay creates gaps at positions 3, 6, and 9 in the global log. As a result, the execution stalls after block 2, and subsequent blocks(\eg, blocks 7, 8, 10, and 11) cannot be executed until the missing blocks arrive.
This dependency chain stalls the global log and causes the end-to-end latency to grow dramatically, even though the majority of instances continue to make local progress.

Table~\ref{tab:breakdown} shows experimental results to evaluate the impact of global ordering phase, which exceeds 90\% of the total end-to-end latency in ISS, which is a state-of-the-art pre-determined global ordering Multi-BFT consensus.
Although recent works such as Ladon~\cite{Ladon2025} and Orthrus~\cite{Orthrus2025} adopt dynamic or hybrid ordering to mitigate blocking, they still retain a global ordering barrier. Consequently, Orthrus continues to spend approximately 70\% of its total latency in global ordering under the same conditions.
Prior studies~\cite{stathakopoulou2022state, Orthrus2025} also obtain similar findings that under certain conditions, the global ordering stage can dominate overall latency.

\begin{table}[t]
\centering
\caption{Latency breakdown (seconds) under one straggler among 16 replicas in a WAN environment.}
\label{tab:breakdown}
\setlength{\tabcolsep}{6pt} 
\begin{tabular}{lcccc}
\toprule
\textbf{Protocol} & \textbf{Transmission} & \textbf{Consensus} & \textbf{Ordering} & \textbf{Execution} \\
\midrule
ISS~\cite{stathakopoulou2022state}     & 0.177 & 2.502 & 34.513 & 0.551 \\
Orthrus~\cite{Orthrus2025}             & 0.176 & 2.553 & 7.430  & 0.558 \\
\bottomrule
\end{tabular}
\vspace{-4mm}
\end{table}

\bheading{Execution parallelism left untapped.}
Beyond ordering, existing Multi-BFT systems also underutilize execution parallelism. 
After transactions are globally ordered, replicas typically execute them sequentially to preserve determinism. 
This design overlooks the fact that many transactions are independent and can safely execute in parallel. 
Consequently, Multi-BFT consensus faces two layers of serialization: one at the ordering stage and another at the execution stage. 
This \textit{double serialization barrier} fundamentally limits scalability, even when the consensus layer scales linearly with the number of instances.

We now simply analyze the theoretical execution time under sequential and parallel models. 
Let $t$ denote the execution time of a single transaction. 
In the sequential model, execution time grows linearly with the number of transactions $N$, \ie, $T_{\text{seq}} = N \cdot t$. 
In contrast, parallel execution allows multiple transactions to be processed concurrently. 
With $k$ parallel execution units, the theoretical execution time becomes $T_{\text{par}} = \lceil N / k \rceil \cdot t$. 
Thus, increasing the degree of parallelism effectively mitigates the execution bottleneck.

\bheading{Summary.} This work is motivated by the observation that ordering and execution are inherently connected: the way transactions are ordered determines how they can be executed. By rethinking this relationship, we can design a system that parallelizes both dimensions simultaneously by removing the global ordering in Multi-BFT consensus.

\section{System Model and Goals} \label{sec:hydramodel}
\subsection{System Model}
We consider a system composed of $n = 3f+1$ replicas, collectively denoted as the set $\mathcal{N}$, responsible for processing transactions from a group of clients. We assume a subset of up to $f$ replicas as \textit{Byzantine}, represented as $\mathcal{F}$. Byzantine replicas may behave arbitrarily. The remaining replicas (in $\mathcal{N} \setminus \mathcal{F}$) are considered honest and strictly follow the protocol.
We assume a single, computationally bounded adversary that controls all Byzantine replicas and cannot break cryptographic primitives to falsify messages from honest replicas (with a negligible probability). 
Each replica maintains a public/private key pair for signing and verifying messages.

\bheading{Network model.} We assume each pair of honest replicas is connected by an authenticated and reliable communication link. We adopt the partial synchrony model proposed by Dwork \etal \cite{dwork1988consensus}, commonly used in BFT consensus \cite{pbft1999, hotstuff}. There is an established bound, denoted as $\Delta$, and an undefined Global Stabilization Time (\textsf{GST}). After the \textsf{GST} point, the delivery of any message transmitted between two honest replicas within the $\Delta$ limit is guaranteed. That is, the system behaves  \textit{synchronously} after the \textsf{GST}.

\subsection{Data Model} 
\bheading{Objects.} 
We adopt an object-centric design~\cite{birrell1984implementing,open1992introduction,wollrath1996distributed,redmond1997dcom,ostrowski2008programming,weihl1988commutativity,blackshear2023sui}, where objects serve as the fundamental data units processed within the system. These objects are persistent, similar to accounts, and are formally represented as a tuple:
    \( o = (\textit{key}, \textit{value})\),
where \(\textit{key}\) is a unique identifier, and \(\textit{value}\) represents the current state of the object, which can be updated through transactions. 
For example, in Ethereum, each account can be treated as a distinct object, where the account address is the key and its balance is the value.

To ensure consistency and prevent conflicts in concurrent execution, objects can be explicitly locked and unlocked during transaction processing. When a transaction intends to modify an object, it must first acquire a lock on the object, preventing other transactions from modifying it simultaneously. Once the transaction is confirmed, the lock is released, allowing subsequent transactions to access the object. 


\bheading{Transactions.} 
A transaction is structured as a Directed Acyclic Graph (DAG), formally defined as:
\( tx = (\textit{id}, V, E) \), where \(\textit{id}\) is a unique identifier for the transaction. The set \( V \) consists of vertices, where each vertex represents an operation on an object. Formally, a vertex \( v \) is expressed as:
\(v = (o, p, c, \sigma)\), where \( o \) is the object being modified, \( p \) denotes the operation applied to \( o \), \( c \) represents the conditions required to execute \( p \), and \( \sigma \) is the cryptographic signature of the owner of \( o \). The set \( E \) contains directed edges, where an edge \( e_{i,j} \in E \) signifies a dependency constraint, indicating that vertex \( v_i \) must be processed before vertex \( v_j \) can be executed.


\bheading{Block.} A block is defined as a tuple \( b = (txs, ins, sn, \sigma) \), where \( txs \) denotes a batch of transactions, and \( ins \) specifies the instance processing the block. The sequence number \( sn \) is the index of a block within its instance. Finally, \( \sigma \) represents the cryptographic signature on \( b \), guaranteeing both authenticity and integrity.

\subsection{Preliminaries}

\bheading{Sequenced broadcast (SB).}
SB is a variant of Byzantine total order broadcast that provides a consistent total ordering of messages among replicas.
In SB, a designated leader \textit{broadcasts} messages that are associated with monotonically increasing sequence numbers, while all replicas collaborate to \textit{deliver} these messages in a consistent global order.
It employs a failure detector to identify and handle faulty or silent leaders. SB guarantees two properties: 
\begin{packeditemize}
    \item \textit{Termination.} All honest replicas eventually deliver exactly one message for each sequence number. 

    \item \textit{Agreement.} The delivered messages are identical across all honest replicas

\end{packeditemize}
We adopt SB as a black-box abstraction for our consensus instances, where it accepts client transactions as input and produces a totally ordered stream of delivered transactions that are consistent among all honest replicas.


\subsection{System Goals}\label{subsec:hydra_goal}
We consider a Multi-BFT system composed of \( m \) BFT instances, indexed from \( 0 \) to \( m-1 \). Transactions are generated by clients and forwarded to replicas for processing, serving as the system’s input. 
Each instance operates in sequential rounds of a SB protocol, wherein a designated leader broadcasts a block of transactions and coordinates with backup replicas to deliver it. Transactions are executed after being partially ordered within an SB instance. A transaction is considered confirmed once it has been executed, regardless of whether the execution is successful or unsuccessful. The system state \( S \) is represented as a tuple, where each element corresponds to the maximum sequence number \( sn \) of an SB instance. Formally, the Multi-BFT system state is defined as:
\begin{equation}
    S = (sn_0, sn_1, \dots, sn_{m-1})
\end{equation}
where \( sn_i \) denotes the maximum sequence number for the SB instance indexed by \( i \).
The system must ensure two fundamental properties:

\begin{packeditemize}
    \item \textbf{Safety.} If two honest replicas reach the same state \( S \), they must have identical values for every object in \( S \). 

    \item \textbf{Liveness.} If a transaction \( tx \) is received by at least one honest replica, then the client will eventually receive a response for \( tx \).
\end{packeditemize}


\section{Design Rationale}\label{sec:rationale}
We present key insights of removing global ordering in Multi-BFT consensus, followed by challenges and corresponding solutions introduced by this new architectural design.

\subsection{Key Insights}
Traditional Multi-BFT systems enforce a global order on all transactions, although correctness only requires that all operations on the same object be processed in a consistent sequential order across replicas. Transactions accessing disjoint objects do not need global ordering and can safely execute in parallel.
This insight motivates an object-centric redesign of Multi-BFT architecture: by maintaining per-object ordering, \sysname preserves deterministic execution while eliminating global ordering, thereby unlocking significantly higher concurrency.


\bheading{A concrete example.}
To illustrate the idea, consider a transaction where user $A$ transfers 10 tokens to user $B$.
This transaction consists of two operations: $A.value{-}10$ and $B.value{+}10$, 
each targeting a distinct object.
In conventional Multi-BFT systems, both operations are encapsulated in a single transaction that must be totally ordered with all others to ensure deterministic replay.
In contrast, \sysname\ allows these operations to be processed without global ordering.
If both objects $A$ and $B$ reside in the same instance, the transaction follows that instance’s local order; if they are placed in different instances, each operation is ordered locally within its own instance.

\subsection{Challenges and Solutions}
Decomposing a transaction across multiple instances introduces two key challenges.

\bheading{Challenge 1: Atomicity across instances.}
Suppose instance~0 executes $B.value{+}10$ while instance~1 later rejects $A.value{-}10$ due to insufficient balance.
The resulting state is inconsistent---$B$ gains tokens that were never deducted from $A$.
A straightforward remedy would be to roll back $B.value{+}10$ when $A.value{-}10$ fails, but this is only safe if $B$’s updated balance has not been used in subsequent transactions.
Otherwise, reverting $B.value{+}10$ could invalidate dependent operations, leading to cascading inconsistencies.

To prevent such inconsistencies, \sysname adopts a lock-based execution model inspired by classical concurrency control mechanisms such as Two-Phase Locking (2PL)~\cite{eswaran1976notions,bernstein1987concurrency,gray1992transaction}. Each transaction must acquire locks on all of its objects before execution.
Only after obtaining all locks is the transaction executed. After that, all acquired locks in the transaction are released. This ensures that no transaction is ever partially executed: either all operations complete atomically or none take effect. 

\bheading{Challenge 2: Conflicting ordering among instances.}
Different instances may impose different orders on the same set of transactions, creating deadlocks.
For example, consider two transactions: $tx_1: A \rightarrow B$ and $tx_2: B \rightarrow A$.
If instance~0 locks $A$ for $tx_1$ while instance~1 locks $B$ for $tx_2$, both transactions will wait indefinitely for the other’s lock.
To resolve deadlocks, \sysname\ integrates a deadlock detection and resolution mechanism~\cite{bernstein1987concurrency}:
once a transaction is confirmed but blocked on lock acquisition, the system searches for cycles in the wait-for graph and deterministically aborts one or more transactions to break the cycle.

These two mechanisms jointly ensure that all operations on the same object are guaranteed to be executed in a consistent and sequential order across replicas. It ensures that no transaction is ever partially executed---each transaction either completes all its intended operations atomically or is entirely aborted without leaving side effects. Importantly, this design allows the system to make progress with high concurrency.

\begin{figure}[t]
	\centering
    \includegraphics[width=0.9\linewidth]{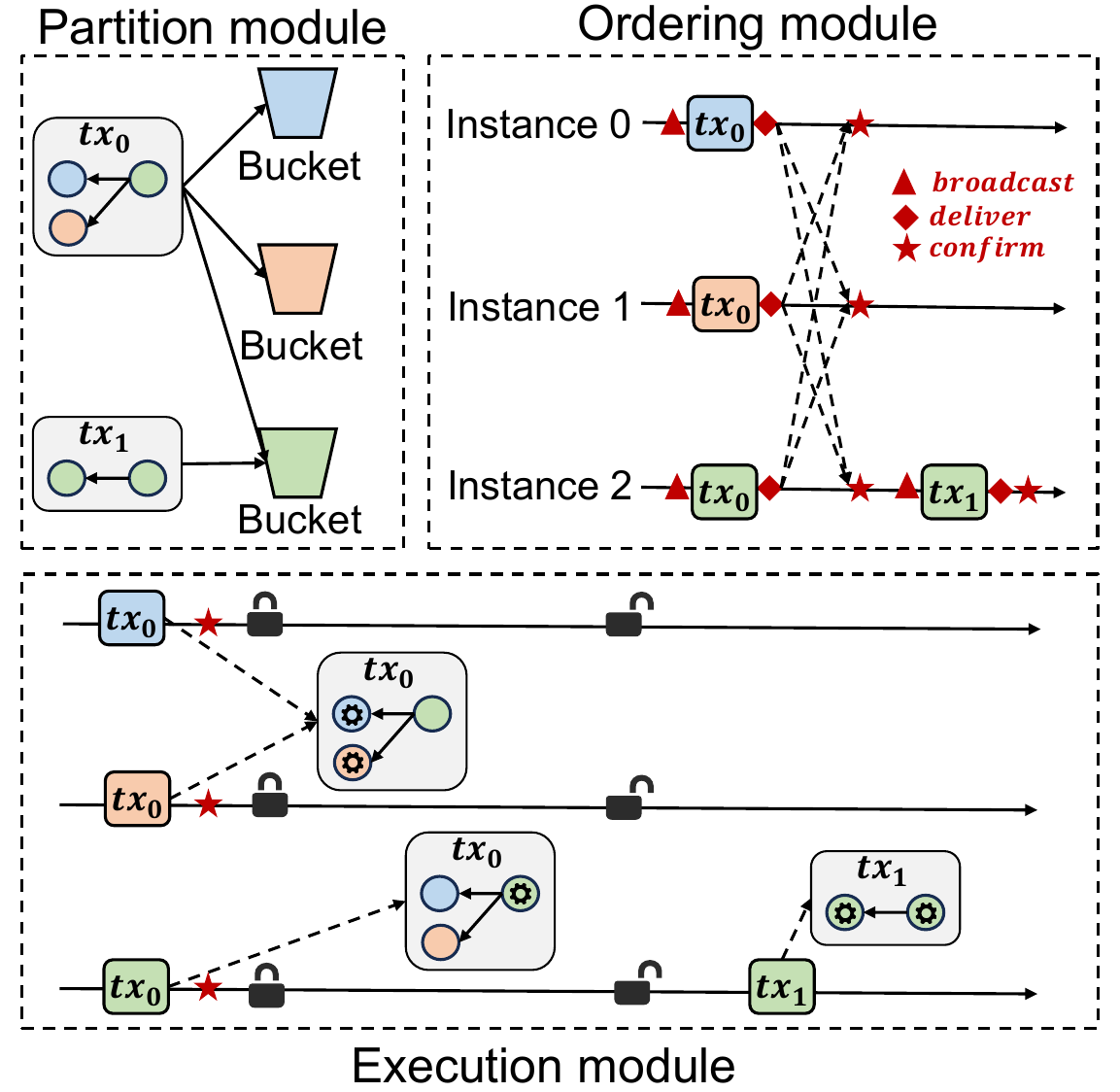}
	\caption{Overview of \sysname. \sysname contains three key steps: transaction partition, transaction ordering, and transaction execution.}
	\label{fig:overview}
    \vspace{-4mm}
\end{figure}

\section{System Design} \label{sec:hydradesign}
We present \sysname, which contains three key steps: transaction partition (\secref{sec:hydrapartitionAlgorithm}), transaction ordering (\secref{sec:hydraorderAlgorithm}), and transaction execution (\secref{sec:hydraexecuteAlgorithm}). Except for these transaction processes, \sysname also performs checkpointing and garbage collection (\secref{sec:hydracheckpoint}) to ensure efficient resource management and state consistency.

\subsection{Architectural Overview}
\figref{fig:overview} presents an architectural overview of \sysname, which operates multiple instances of the SB protocol in parallel. 
Transactions are assigned to buckets based on the objects they contain, and each bucket is mapped to a specific instance.  
\sysname confirms transactions through the following phases:  
\begin{packeditemize}
    \item \textbf{Transaction partition}: Each transaction is assigned to the corresponding instances based on the objects it involves. If all objects within a transaction belong to the same instance, it is classified as an \textit{intra-instance transaction}; otherwise, it is a \textit{cross-instance transaction}. 
    
    \item \textbf{Transaction ordering}: Replicas execute SB protocols within their respective instances to order transactions. A transaction is considered confirmed once all its corresponding instances have successfully delivered it.  
    
    \item \textbf{Transaction execution}: Replicas execute transactions by respecting both the order (sequence numbers) and the dependencies (represented by the DAG). 
\end{packeditemize}

\bheading{A concrete example.} \figref{fig:overview} illustrates both cross-instance and intra-instance transactions. 
Consider a transaction $tx_0$ that involves three objects distributed across three different instances. The objects in the first layer are assigned to instance 0 and instance 1, while the object in the second layer belongs to instance 2. The execution proceeds as follows: all three objects are proposed and reach consensus concurrently within their respective instances. Once all involved objects have been delivered, $tx_0$ is considered confirmed. 

The replica then acquires locks on all related objects, ensuring atomicity. Execution begins with the first-layer objects in instance 0 and instance 1, followed by the second-layer object in instance 2, respecting the topological order of the transaction’s internal DAG.
In contrast, transaction $tx_1$ is an intra-instance transaction that involves two objects, both assigned to instance 2. Since all related objects are processed within the same instance, $tx_1$ is confirmed immediately upon delivery in instance 2. Once confirmed, the transaction is executed directly according to its internal DAG, without requiring coordination across instances.

\subsection{Transaction Partition}\label{sec:hydrapartitionAlgorithm}
Algorithm~\ref{algorithm:hydramain} shows the detailed process to partition transactions. 
Upon receiving a transaction \( tx \) (Line 11), replica \( r \) assigns \( tx \) to the appropriate buckets based on the objects it involves. The process begins with verifying the transaction’s format and checking the owner's signature for authenticity (Line 12). 
Once validated, replica \( r \) invokes the \( \mathsf{assign}\) function to determine the corresponding bucket(s) for \( tx \) (Line 14). Specifically, the replica obtains the appropriate bucket index for each object in the transaction. The transaction is then appended to the corresponding bucket (Line 15). Each bucket operates as an append-only list for backup replicas but supports both push and pull operations for the leader. To prevent duplication, if \( tx \) is already present in a bucket, it will not be added again.

\subsection{Transaction Ordering}\label{sec:hydraorderAlgorithm}
Each replica maintains a \( log \) for each SB instance, which facilitates the partial ordering of transactions within that instance. The log comprises multiple entries, each capable of storing a batch of transactions.  
The SB protocol defines two key events: \( \langle\mathsf{sb\text{-}broadcast}|b\rangle \) and \( \langle\mathsf{sb\text{-}deliver}|b\rangle \). The event \( \langle\mathsf{sb\text{-}broadcast}|b\rangle \) occurs when a block \( b \) is broadcasted within the SB, while \( \langle\mathsf{sb\text{-}deliver}|b\rangle \) denotes the successful ordering and delivery of block \( b \).

\bheading{Broadcast transactions.} 
As shown in Algorithm~\ref{algorithm:hydramain}, upon initializing \sysname (Line 1), if replica \( r \) is the leader of \( instance_i \), it enters a loop where it creates and broadcasts a block \( b \) for each sequence number \( sn \) in the current epoch (Lines 3-4).  
In each iteration, the leader pulls a specified number of the oldest transactions from the instance’s bucket (Line 5). If there are insufficient transactions to meet this threshold, the leader waits for a timeout. If the threshold is still not met after the timeout, it pulls all available transactions. It then assigns the instance index and sequence number to the block, signs it, and broadcasts it within the SB instance (Lines 6-7). All replicas participate in the ordering and delivery of the block.  
If \( r \) is not the leader, it acts as a backup, assisting in the ordering and delivery process within the SB instance.

\begin{algorithm}[t]
\caption{\sysname main algorithm on replica $r$}
\label{algorithm:hydramain}
\begin{algorithmic}[1]

\Upon{initialize system}
  \For{$i \in [0, m-1]$} \Comment{\textcolor{purple}{$m$ is the number of instances}}
    \If{$\mathsf{isLeader}(i, r)$} \Comment{\textcolor{purple}{$r$ is $instance_i$'s leader}} 
      \For{$sn \in \{0,1,2,\dots$\}}
        \State $b.\mathit{txs} \gets \mathsf{pullValidTx}(bucket_i)$
        \State $b.\mathit{ins} \gets i$, $b.\mathit{sn} \gets sn$, $b.\sigma \gets \mathsf{sign}(b, r)$
        \State \textbf{trigger} $\langle\mathsf{sb\text{-}broadcast}|b\rangle $
      \EndFor
    \EndIf
  \EndFor
\EndUpon

\Statex \Comment{\textcolor{purple}{transaction partition~~~~~~~~~~~~~~~~~~~~~~~~~~~~~~~~~~~~~~~~~~}} 
\Upon{receive $tx$}
  \If{$\mathsf{validateTx}(tx)$} \Comment{\textcolor{purple}{check format and signature}}
    \ForAll{$v \in tx.V$}
      \State $i \gets \mathsf{assign}(v.o)$  \Comment{\textcolor{purple}{map object to an instance}} 
      \State $\mathsf{push}(tx, bucket_i)$
    \EndFor
  \EndIf
\EndUpon

\Statex\Comment{\textcolor{purple}{transaction ordering~~~~~~~~~~~~~~~~~~~~~~~~~~~~~~~~~~~~~~~~~~}} 
\Upon{event $\langle \text{sb-deliver} \mid b \rangle$}
  \State $log[b.\mathit{ins}][b.\mathit{sn}] \gets b$ 
  \ForAll{$tx \in b$}
    \ForAll{$v \in tx.V \ \textbf{where}\ \mathsf{assign}(v.o) = b.\mathit{ins}$}
      \State $\mathsf{delivered}(v) \gets \mathsf{true}$
    \EndFor
    \If{$\forall w \in tx.V:\ \mathsf{delivered}(w)$}
      \State $\mathsf{confirm}(tx)$
    \EndIf
  \EndFor
\EndUpon

\end{algorithmic}
\end{algorithm}

\bheading{Deliver transactions.} 
As shown in Algorithm~\ref{algorithm:hydramain}, upon delivering a block \( b \) from an SB instance  (Line 18), the replica orders the block by appending it to the \( log \) of the instance indexed \( b.ins \) at sequence number \( b.sn \)  (Line 19). Then, for each transaction \( tx \) contained in \( b \), the replica iterates over all vertices \( v \) in \( tx \).  If the object $v.o$ is assigned to the current instance $b.ins$, the replica marks \( v \) as delivered  (Lines 21-23). Once all vertices in \( tx \) are delivered, the transaction \( tx \) is confirmed (Lines 24-26).

\bheading{Failure detector.} 
In \sysname, a failure detection module is integrated into the SB protocol to ensure liveness in the presence of faulty leaders. This mechanism enables replicas to replace a misbehaving or crashed leader for each SB instance, and has been widely adopted in prior BFT systems~\cite{castro2002practical, hotstuff, FireLedger, gueta2019sbft}.  
For example, in PBFT, when replicas suspect the leader of Byzantine behavior, they initiate a view change to elect a new one. Similarly, in \sysname, when the leader \( l_i \) of instance \( \mathit{instance}_i \) fails at sequence number \( sn \), all honest replicas detect the failure, reach agreement on the state of \( \mathit{instance}_i \) and the new leader \( l'_i \), and then restart the instance from sequence number \( sn \) under the new leader.

\begin{algorithm}[t]
\caption{Execute transactions on replica $r$}
\label{algorithm:hydraexe}
\begin{algorithmic}[1]

\Upon{$\mathsf{firstPending}(log[i]) \neq \bot$}
\Statex \Comment{\textcolor{purple}{$log[i]$ denotes the ordered transaction log of $instance_i$}}
  \State $tx \gets \mathsf{firstPending}(log[i])$ 
  \ForAll{$v \in tx.V$}
    \If{$\mathsf{assign}(v.o) = i$}
      \State $\mathsf{lock}(v.o)$
    \EndIf
  \EndFor
  \If{$\forall v \in tx.V:\ \mathsf{locked}(v.o)$}
    \ForAll{$v \in \mathsf{toposort}(tx.V)$}
      \State $\mathsf{execute}(tx)$ 
       \Comment{\textcolor{purple}{execute vertices by dependency}}
    \EndFor
    \If{$\exists v \in tx.V : \mathsf{failed}(v)$} 
      \State $\mathsf{rollback}(tx)$ \Comment{\textcolor{purple}{undo all executed vertices in $tx$}}
      \State $\mathsf{replyToClient}(tx,\ \textsc{failure})$
    \Else
      \State $\mathsf{replyToClient}(tx,\ \textsc{success})$
    \EndIf
    \ForAll{$v \in tx.V$}
      \State $\mathsf{unLock}(v)$
    \EndFor
  \EndIf
\EndUpon
\end{algorithmic}
\end{algorithm}

\begin{algorithm}[ht]
\caption{Deadlock resolution on replica $r$}
\label{algorithm:deadlock}
\begin{algorithmic}[1]

\Upon{$\mathsf{confirmed}(tx)\ \wedge\ \mathsf{isInterTx}(tx)$}
  \State $D \gets \mathsf{ExpandDeadlockGroup}(tx)$
  \State $victimList \gets \mathsf{SelectVictims}(D)$
  \ForAll{$tx \in victimList$}
    \State $\mathsf{abort}(tx)$
      \State $\mathsf{reAdd2Buckets}(tx)$  
    \EndFor
\EndUpon

\Statex \Comment{\textcolor{purple}{search transactions forming a deadlock with $tx$~~~~~~~~~}}
\Function {$\mathsf{ExpandDeadlockGroup}(tx)$}{}
  \State $D \gets \{tx\}$
  \Repeat
    \State $D_{\text{prev}} \gets D$
    \ForAll{$t \in D_{\text{prev}}$}
      \ForAll{$t' \in \mathsf{findOrderingConflicts}(t)$}
        \State $D \gets D \cup \{t'\}$
      \EndFor
    \EndFor
  \Until{$D = D_{\text{prev}}$}
  \State \Return $D$
\EndFunction

\Statex \Comment{\textcolor{purple}{select victim transactions in $D$ to eliminate deadlocks~~}}
\Function {$\mathsf{SelectVictims}(D)$}{}
  \State $victims \gets \emptyset$
  \While{$\mathsf{hasDeadlock}(D)$}
    \State $victim \gets \text{transaction in } D \text{ with smallest index}$
    \State $victims \gets victims \cup \{victim\}$
    \State $D \gets D \setminus \{victim\}$
  \EndWhile
  \State \Return $victims$
\EndFunction

\Statex \Comment{\textcolor{purple}{find transactions ordered inconsistently with $tx$ ~~~~~~~~}}
\Function{$\mathsf{findOrderingConflicts}(tx)$}{} 
  \State $C \gets \emptyset$
  \ForAll{$(i, j) \in \mathsf{instances}(tx),\ i \neq j$}
    \ForAll{$tx' \in \mathsf{prefix}[i](tx)$}
      \If{$tx' \in \mathsf{suffix}[j](tx)$}
      \Statex \Comment{\textcolor{purple}{$tx'$ is ordered after $tx$ in $j$}}
        \State $C \gets C \cup \{tx'\}$ 
      \ElsIf{$tx' \notin \mathsf{prefix}[j](tx) \land j \in \mathsf{instances}(tx')$} \Comment{\textcolor{purple}{$tx'$ pending in $j$, will be ordered after $tx$ in $j$}}
        \State $C \gets C \cup \{tx'\}$ 
      \EndIf
    \EndFor
  \EndFor
  \State \Return $C$
\EndFunction

\end{algorithmic}
\end{algorithm}

\subsection{Transaction Execution}\label{sec:hydraexecuteAlgorithm} 

\sysname’s execution model is inspired by well-established ideas in distributed databases. The requirement that a transaction acquires all necessary object locks before execution closely echoes 2PL~\cite{eswaran1976notions,gray1992transaction,bernstein1987concurrency}, which ensures atomicity and isolation by preventing other transactions from accessing locked resources. Similarly, the deadlock resolution mechanism parallels classical deadlock detection techniques that rely on identifying cycles in wait-for graphs~\cite{bernstein1987concurrency,kaveh2001deadlock}. 

As shown in Algorithm~\ref{algorithm:hydraexe}, when a replica identifies the first pending transaction $tx$ in $log[i]$ (Lines 1-2), it iterates over all vertices $v \in tx.V$. For each vertex $v$, if its associated object $v.o$ is assigned to the current instance, the replica attempts to acquire a lock on $v.o$ (Lines 3-6). Once all required objects are successfully locked (Line 8), the replica executes the transaction according to the topological order of the DAG (Lines 9-11). If any vertex fails during execution, the replica rolls back all successfully executed vertices and replies to the client with a failure (Lines 12-14). Otherwise, if all vertices are executed successfully, the replica replies with a success (Line 16). Finally, the replica releases all locks held by $tx$ (Lines 18-20), enabling subsequent transactions to access those objects.

Note that the algorithm does not explicitly handle the case where $tx$ appears in only a subset of its target instances. In such cases, if $tx$ remains unconfirmed at the end of an epoch, the replica will abort $tx$ and release any locks it may have acquired, and a failure response is sent back to the client.

\bheading{Deadlock resolution.} Deadlocks may occur in this execution model, as transactions can appear in different orders across multiple instances, leading to conflicting execution dependencies. This is a well-known issue in concurrent systems, especially when each instance executes transactions independently based on local orderings. To address this, we adopt a deterministic deadlock resolution mechanism, as shown in Algorithm~\ref{algorithm:deadlock}.

When a cross-instance transaction $tx$ is confirmed (Line 1), \ie, it has been delivered by all the instances it is assigned to, the replica initiates deadlock detection by invoking $\mathsf{expandDeadlockGroup}$ (Line 2). This procedure recursively constructs a set of transactions that may be involved in a cycle with $tx$ due to inconsistent instance-level orderings (Lines 13-15). Specifically, the replica locates the position of $tx$ in the log of the corresponding instance (Line 31), and scans backwards to identify earlier pending transactions (Line 32). If such a transaction $tx'$ is found to appear after $tx$ in another instance to which both are assigned (Line 33), or if $tx'$ has not yet been ordered relative to $tx$ in another instance (Line 35), the two are considered to have an inconsistent ordering. 
The replica then adds $tx'$ to the conflict set $C$ (Line 36). This ensures that both explicit and potential future inconsistencies are captured, even when $tx'$ has not yet appeared in all relevant instance logs.  
Among these, it selects transactions that also access other objects in $tx$ and appear after $tx$ in those corresponding logs---indicating a conflicting relative order (Line 13). These transactions are added to the deadlock group (Line 14), and the process repeats until no new transactions are included (Line 17).

Once the deadlock group $D$ is constructed, the replica calls $\mathsf{selectVictims}$ (Line 3) to determine which transactions should be aborted to break the cycle. This is done by iteratively removing the transaction with the smallest index (the index can be the hash of the transaction) from the group until the remaining transactions have no deadlock (Lines 22-26). The selected victims are then aborted and re-added to buckets again (Lines 4-7). This process ensures that deadlocks are resolved deterministically and consistently across replicas, without requiring additional consensus rounds.

\subsection{Checkpoint and Garbage Collection}\label{sec:hydracheckpoint}
\sysname employs a checkpoint protocol at the end of each epoch to facilitate state pruning and garbage collection. Once an epoch completes, each replica broadcasts a checkpoint message containing a signed digest summarizing the blocks it has processed during that epoch. Upon collecting at least $2f+1$ matching checkpoint messages, a replica establishes a stable checkpoint. This checkpoint allows replicas to safely discard transactions, whether successfully executed or aborted, from the completed epoch to reduce storage overhead.

\section{CORRECTNESS AND ROBUSTNESS ANALYSIS} \label{sec:hydracorrect} 
\subsection{Safety and Liveness Analysis}
We provide the proof sketch for safety and liveness (defined in \secref{subsec:hydra_goal}) here, while deferring the full proofs to Appendix A \cite{hydralong} due to space constraints.

\bheading{Safety.} Safety in \sysname follows from two key invariants. 
First, at the same delivery frontier, all honest replicas observe identical delivered transactions from each instance.
This is guaranteed by the Agreement property of the underlying SB protocol.
Second, deadlock detection and abort decisions are deterministic functions of these delivered transactions. Given the same set of delivered instance logs, all honest replicas compute the same deadlock group by recursively expanding conflicts.
Abort decisions are then derived from this deadlock group using a deterministic victim selection rule. As a result, all honest replicas abort the same set of transactions and commit the same remaining ones.
 
\bheading{Liveness.} Liveness requires that every transaction submitted by a correct client eventually receives a response.
In \sysname, this is ensured by the Termination property of the SB protocol and the epoch-based execution model. Specifically, any transaction proposed by an honest leader is eventually delivered by the SB protocol. Transactions that cannot be committed due to conflicts or deadlocks are deterministically aborted and reinserted for execution in subsequent epochs. The epoch mechanism bounds lock holding time and prevents transactions from being blocked indefinitely.


\subsection{Impact of Byzantine Behaviors}
\label{sec:byzantine}
We analyze how \sysname behaves under Byzantine behaviors that deliberately target its locking mechanisms and cross-instance coordination. 

\bheading{Lock manipulation by Byzantine leaders.}
A Byzantine leader could manipulate locking behavior from two aspects.

\iheading{(i) \textit{Delaying locked trasactions}.} A Byzantine leader may intentionally delay proposing a cross-instance transaction in one instance after it has already been proposed and has acquired locks in another instance.
For example, for a transaction spanning instances $ins_1$ and $ins_2$, a Byzantine leader in $ins_2$ may withhold or delay its proposal, causing locks acquired in $ins_1$ to be held for an extended period.
This behavior increases blocking for other transactions accessing the same objects and degrades throughput. 

\iheading{(ii) \textit{Creating deadlocks.}} A Byzantine leader may also attempt to induce deadlocks by proposing conflicting orderings across instances.
For instance, two cross-instance transactions $tx_1$ and $tx_2$ may be ordered as $tx_1 \rightarrow tx_2$ in $ins_1$, while a Byzantine leader in $ins_2$ intentionally proposes them in the reverse order $tx_2 \rightarrow tx_1$, creating a deadlock.

\bheading{Adversarial transaction injection by malicious clients.}
Malicious clients may deliberately generate multiple transactions involving the same set of objects and submit them to different instances in adversarial orders, intentionally creating deadlocks across instances and amplifying the likelihood of cross-instance deadlocks.

\bheading{Impact analysis of Byzantine behaviors.}
The above adversarial behaviors target the liveness of the system rather than safety. Byzantine leaders or malicious clients may delay locked transactions or induce cross-instance deadlocks, thereby degrading throughput and increasing latency. 
However, the impact of such adversarial behaviors is bounded.
First, deadlock resolution ensures that any induced deadlock is consistently identified system-wide.  Once detected, the deadlock resolution protocol deterministically aborts a bounded set of transactions, preventing indefinite blocking. Second, the epoch-based execution model enforces an upper bound on lock holding time: transactions that fail to make progress are aborted and retried in subsequent epochs, eliminating the possibility of persistent lock monopolization by Byzantine behavior.
As a result, while adversaries can temporarily reduce system efficiency, they cannot cause unbounded delays, inconsistent execution, or divergence among honest replicas.

\section{Performance Evaluation} \label{sec:hydraevaluation} 
In this section, we evaluate the performance of \sysname and compare it against other Multi-BFT protocols: Ladon~\cite{Ladon2025}, Orthrus~\cite{Orthrus2025}, ISS~\cite{stathakopoulou2022state}, RCC~\cite{gupta2021rcc},  Mir~\cite{MIR-BFT}, and DQBFT~\cite{dqbft}. 
We implemented \sysname in Go~\cite{golang}, using the PBFT consensus protocol~\cite{castro2002practical} as SB instances. 
Our evaluation aims to answer the following research questions:

\begin{packeditemize}
     \item \textbf{Q1:} How does \sysname perform compared to other Multi-BFT protocols? (\secref{sec:hydracompare})
     \item \textbf{Q2:} How does \sysname perform in varying proportions of cross-instance transactions? (\secref{sec:hydraproportion})
    \item \textbf{Q3:} How does \sysname perform under faults? (\secref{sec:hydrafaults}) 
    \item \textbf{Q4:} {How does \sysname’s memory usage scale with the number of instances? (\secref{sec:mem}) }
\end{packeditemize}

\subsection{Experimental Setup}\label{sec:expset}
We deploy our protocols on AWS EC2 instances (c5a.2xlarge) with one instance per node. Each instance  is equipped with 8vCPUs, 16GB RAM, and runs Ubuntu Linux 22.04. Experiments are conducted in both LAN and WAN environments. In the LAN setting, machines communicate over a private network with 1 Gbps bandwidth. For the WAN setup, machines are distributed across four Amazon cloud regions (France, North America, Australia, and Japan) with both public and private network interfaces limited to 1Gbps. We use NTP for clock synchronization.

Each replica acts as the leader for one instance and as a backup for the others, \ie, $m=n$.  
To maximize throughput, we use a large batch size of 4096 transactions, with each transaction carrying a 500-byte payload. We evaluate system performance under two network conditions: one with uniform performance across all instances, and another with a straggler scenario where one instance operates at one-tenth the speed of the others. Each experiment is repeated five times, and we report the median of the results.

We evaluate two key performance metrics: (1) throughput, defined as the number of transactions successfully responded to clients per second, and (2) latency, measured as the average end-to-end delay from the moment clients submit transactions until they receive \( f{+}1 \) responses from replicas. We report the peak throughput in kilo-transactions per second (ktps) before reaching saturation, along with the corresponding latency in seconds (s).

\subsection{Performance Comparison}\label{sec:hydracompare}
\figref{fig:performancewan} and \figref{fig:performancelan} compare the throughput and latency of \sysname and other Multi-BFT protocols with one straggler and without stragglers. We evaluate the throughput and latency with 8--128 replicas in both LAN and WAN environments. We only report results with a single straggler, since performance is primarily limited by the slowest replica and adding more stragglers does not significantly change throughput or latency~\cite{Ladon2025}.
For clarity, the compared protocols are categorized into three groups: 
\textit{pre-determined ordering schemes}, including ISS~\cite{stathakopoulou2022state}, Mir-BFT~\cite{MIR-BFT} and RCC~\cite{gupta2021rcc}, which enforce a fixed global transaction order;  
\textit{dynamic ordering schemes}, including DQBFT~\cite{dqbft} and Ladon~\cite{Ladon2025}, which adapt ordering based on instance progress to mitigate straggler effects; and \textit{hybrid ordering scheme} represented by Orthrus~\cite{Orthrus2025}, which partially relaxes global ordering by leveraging fast-path confirmation for certain transactions.
Following prior studies~\cite{zhang2023txallo}, we set the ratio of cross-instance transactions to 12\%.

\bheading{Performance in WAN.}
\figref{fig:performancewan} compares \sysname and other Multi-BFT protocols in WAN, with and without stragglers.  
With a straggler present, \sysname significantly outperforms pre-determined ordering schemes: with 128 replicas, throughput improves by approximately 9.0×.
This is because in pre-determined ordering schemes, a slow instance delays the entire system, amplifying straggler impact.
Dynamic (DQBFT, Ladon) and hybrid (Orthrus) ordering schemes alleviate this coupling but still require a global ordering phase. {For DQBFT, throughput decreases with more replicas because its single ordering instance becomes a bottleneck as the system scales.} Consequently, \sysname continues to achieve superior performance.
With 8 replicas, it achieves 47\% and 38\% higher throughput than Ladon and Orthrus, respectively, and with 128 replicas, \sysname surpasses DQBFT by 63\%.

Without stragglers, \sysname achieves comparable throughput to ISS, RCC, Ladon, and Orthrus, while consistently maintaining the lowest or near-lowest latency. 
This demonstrates that eliminating global ordering and incorporating deadlock resolution introduces negligible additional overhead, and scalability remains robust across different replica configurations.

\begin{figure}[t]
\vspace{-4mm}
    \centering

    \begin{subfigure}[t]{0.24\textwidth}
        \centering
        \includegraphics[width=\textwidth]{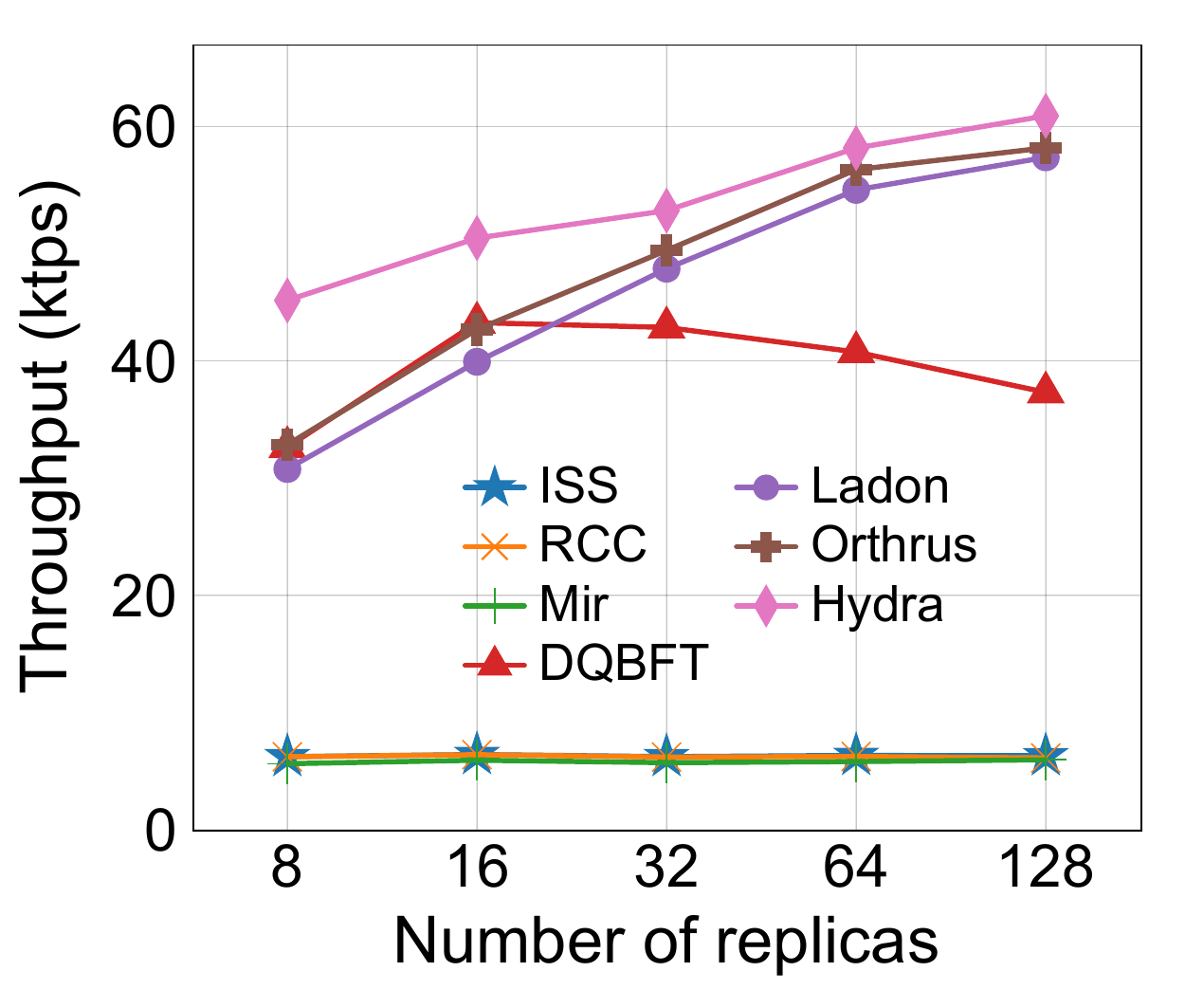}
        \caption{$\#$Straggler = 1, WAN}
        \label{fig:wan1}
    \end{subfigure}
    \hfill
    \begin{subfigure}[t]{0.24\textwidth}
        \centering
        \includegraphics[width=\textwidth]{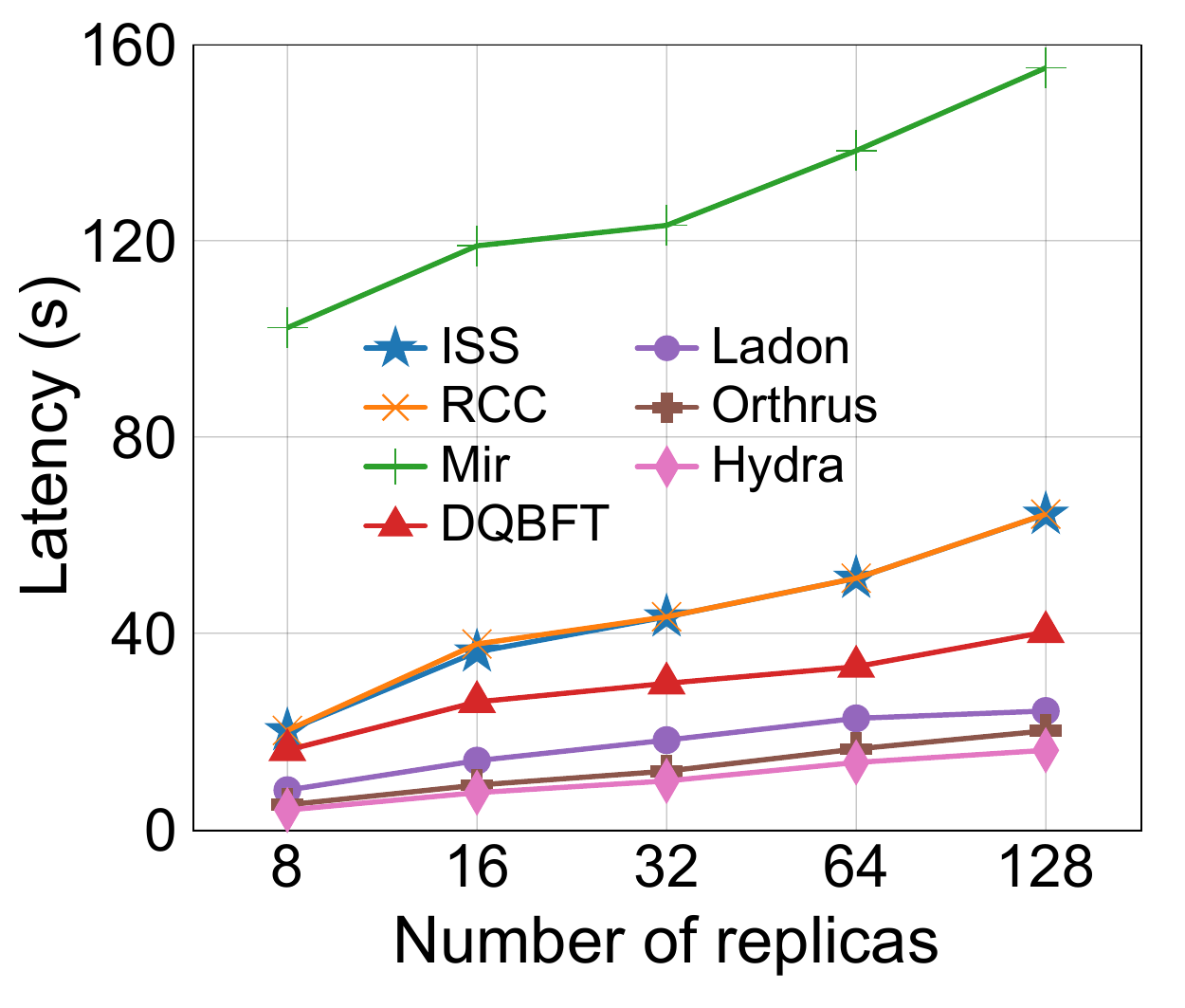}
        \caption{$\#$Straggler = 1, WAN}
        \label{fig:wan2}
    \end{subfigure}
    \hfill
    
    \begin{subfigure}[t]{0.24\textwidth}
        \centering
        \includegraphics[width=\textwidth]{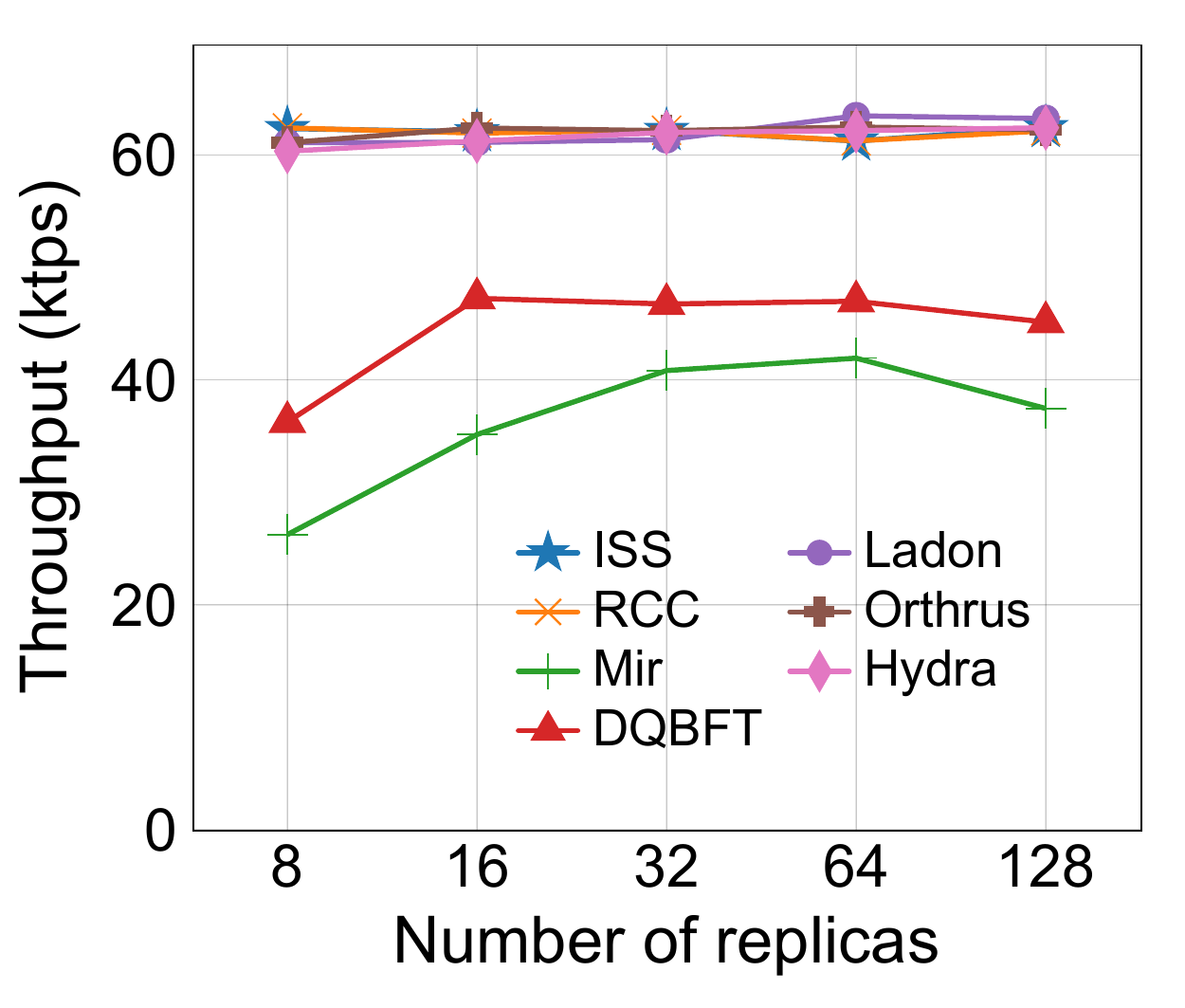}
        \caption{$\#$Stragglers = 0, WAN}
        \label{fig:wan3}
    \end{subfigure}
    \hfill
    \begin{subfigure}[t]{0.24\textwidth}
        \centering
        \includegraphics[width=\textwidth]{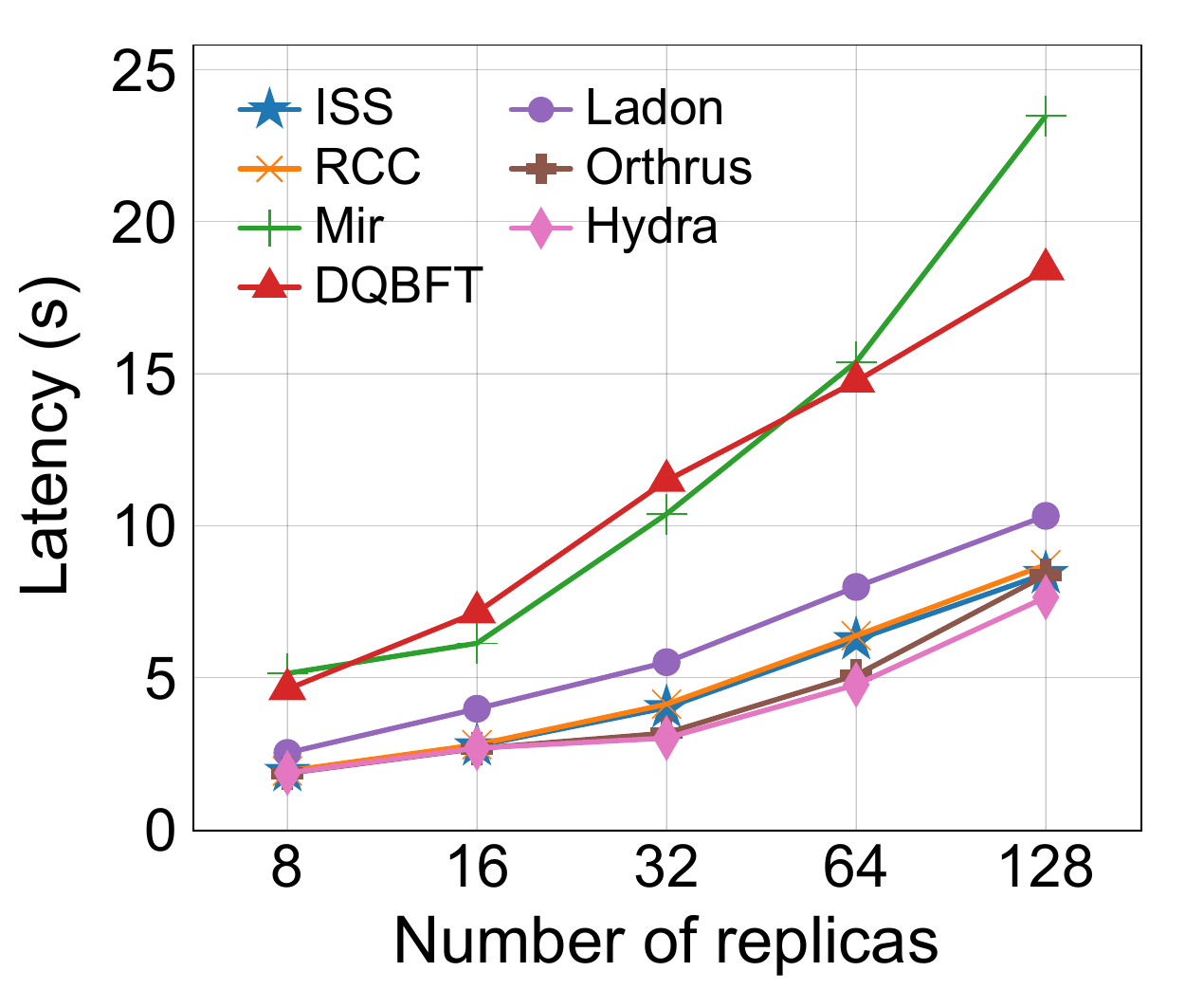}
        \caption{$\#$Stragglers = 0, WAN}
        \label{fig:wan4}
    \end{subfigure}
    \hfill\\
    
    \caption{Throughput and latency of \sysname, ISS, RCC, Mir, DQBFT, Ladon, and Orthrus in WAN.}
    \label{fig:performancewan}
\end{figure}

\begin{figure}[t]
\vspace{-4mm}
    \centering

    \begin{subfigure}[t]{0.24\textwidth}
        \centering
        \includegraphics[width=\textwidth]{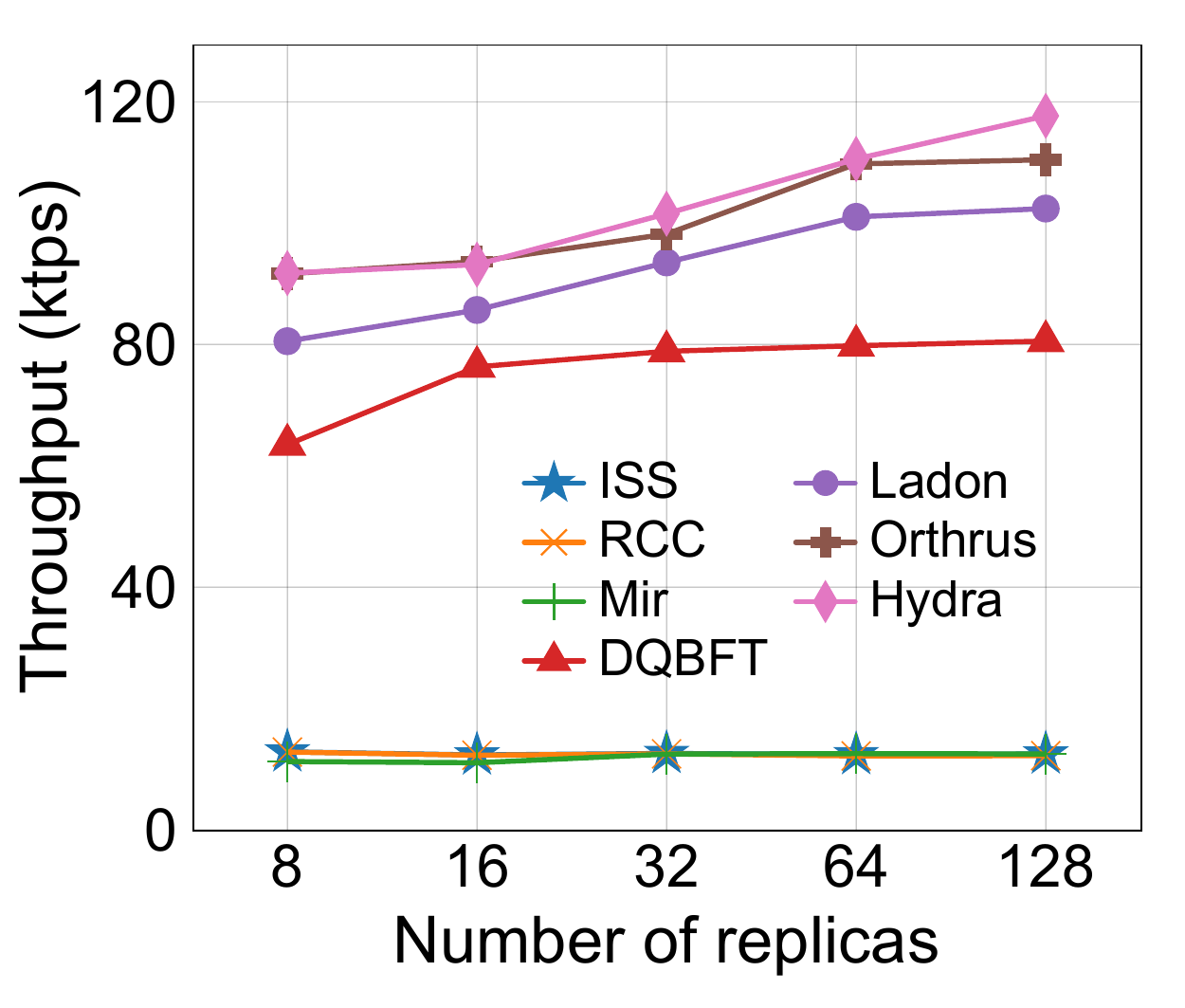}
        \caption{$\#$Straggler = 1, LAN}
        \label{fig:lan1}
    \end{subfigure}
    \hfill
    \begin{subfigure}[t]{0.24\textwidth}
        \centering
        \includegraphics[width=\textwidth]{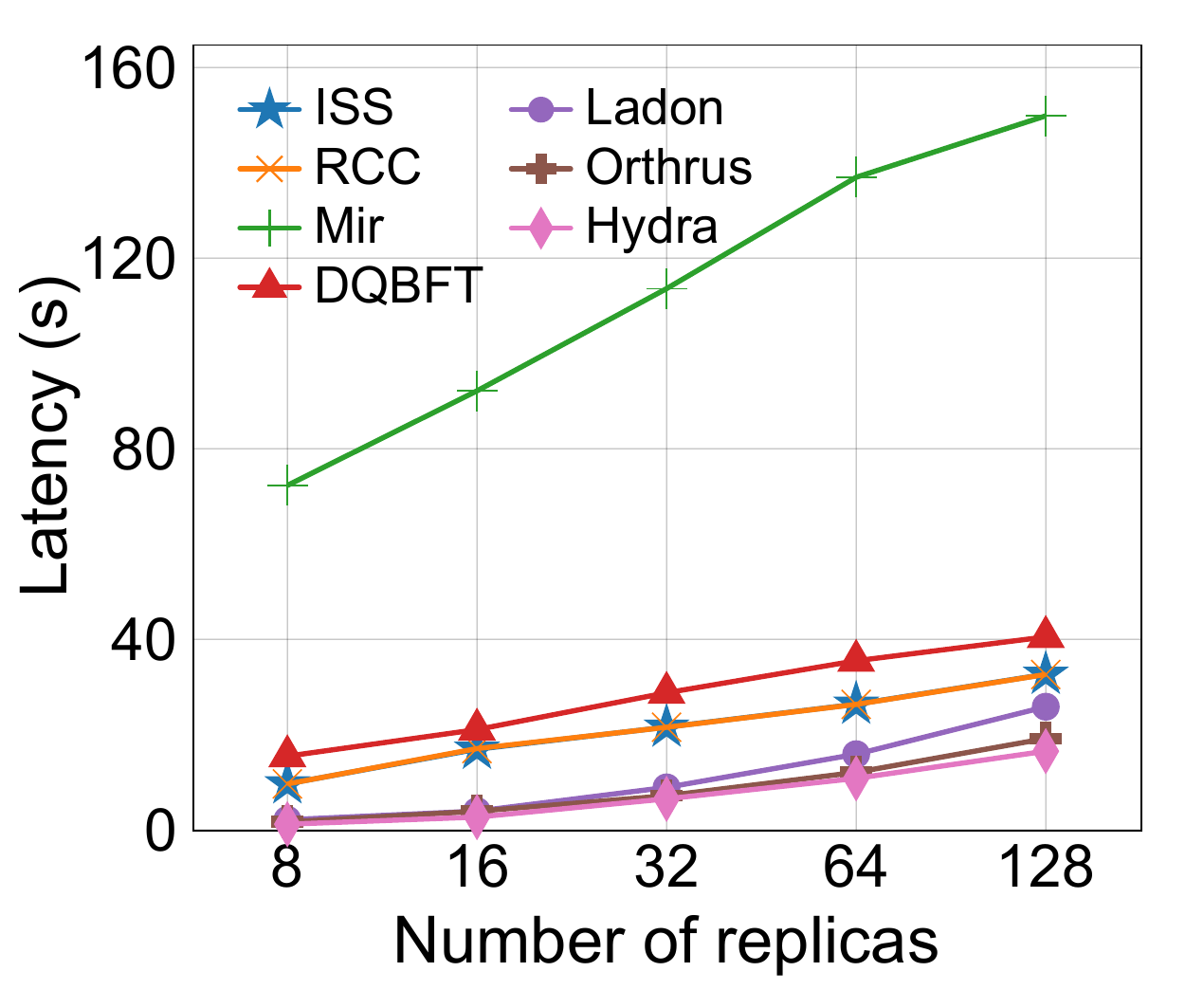}
        \caption{$\#$Straggler = 1, LAN}
        \label{fig:lan2}
    \end{subfigure}
    \hfill
    
    \begin{subfigure}[t]{0.24\textwidth}
        \centering
        \includegraphics[width=\textwidth]{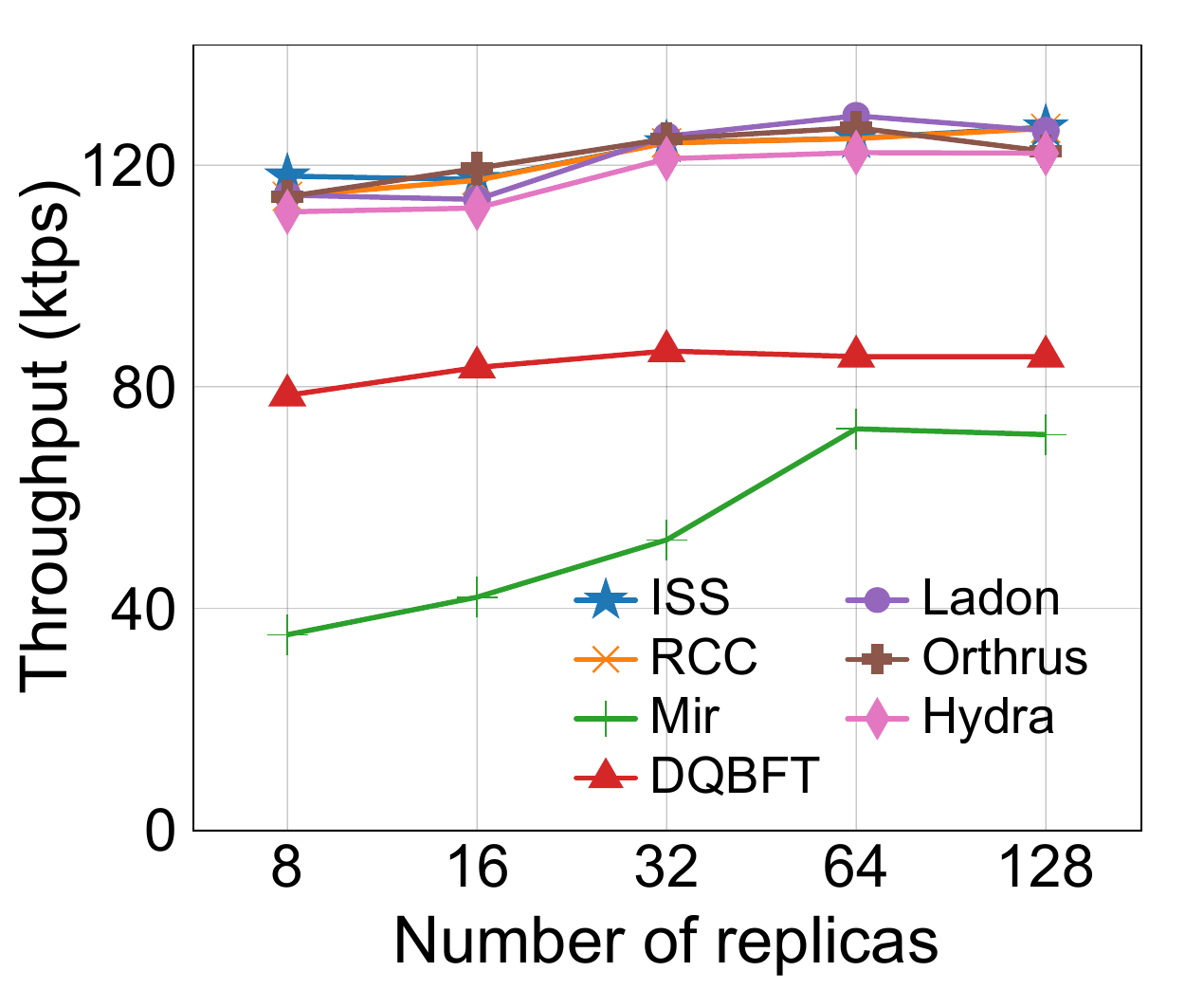}
        \caption{$\#$Stragglers = 0, LAN}
        \label{fig:lan3}
    \end{subfigure}
    \hfill
    \begin{subfigure}[t]{0.24\textwidth}
        \centering
        \includegraphics[width=\textwidth]{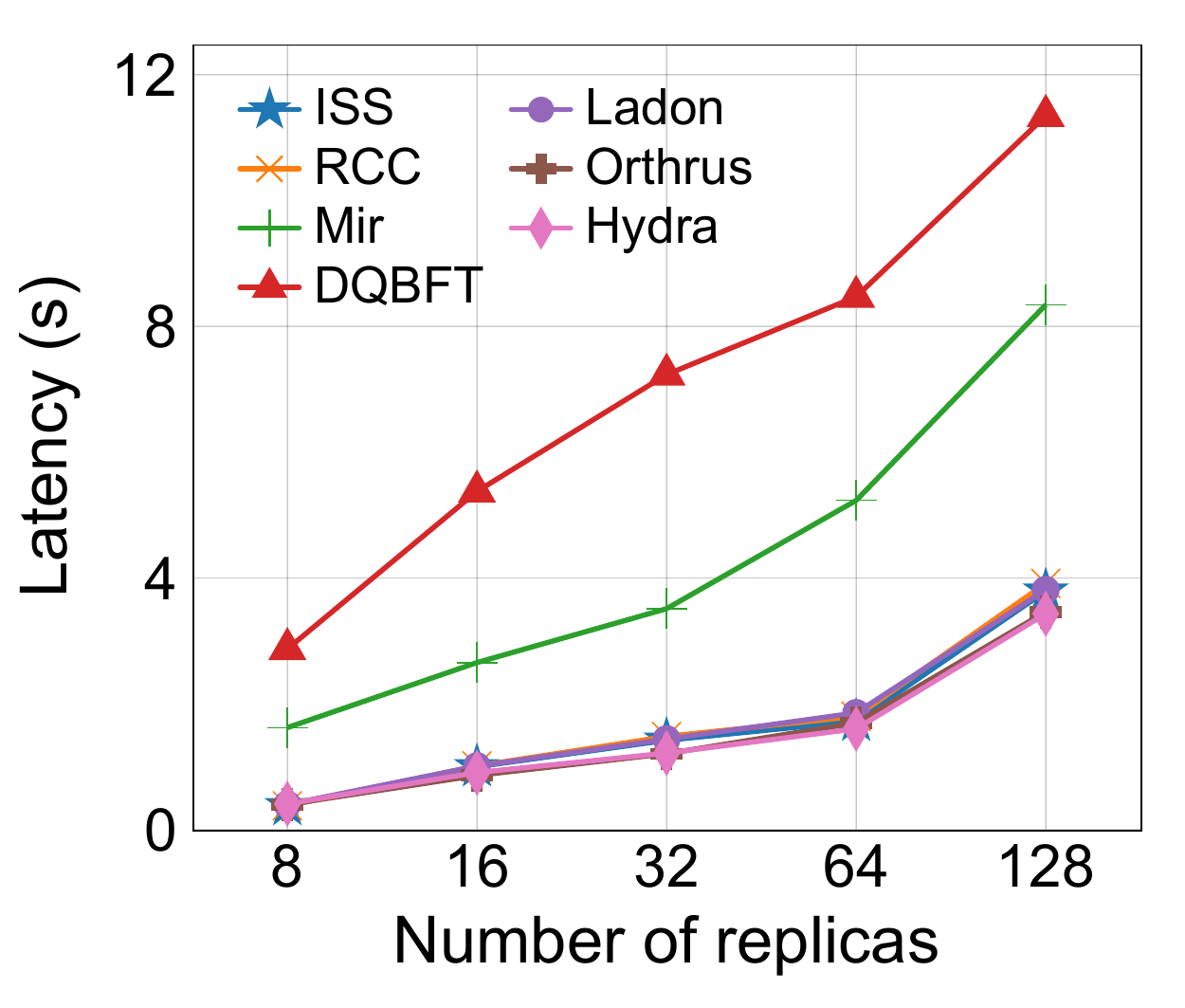}
        \caption{$\#$Stragglers = 0, LAN}
        \label{fig:lan4}
    \end{subfigure}
    \hfill\\
    
    \caption{Throughput and latency of \sysname, ISS, RCC, Mir, DQBFT, Ladon, and Orthrus in LAN.}
    \label{fig:performancelan}
\end{figure}

\bheading{Performance in LAN.} \figref{fig:performancelan} compares \sysname and other Multi-BFT protocols in LAN, with and without stragglers. The overall trends are consistent with the WAN results: \sysname maintains higher throughput and lower latency than the compared Multi-BFT protocols with one straggler. 
With 32 replicas, \sysname achieves a 7.4× throughput improvement and a 70\% latency reduction compared to the pre-determined ordering schemes. These designs remain bottlenecked by fixed global ordering. 
Besides, \sysname continues to outperform dynamic and hybrid ordering schemes by eliminating global ordering. 
When no stragglers are present, \sysname delivers throughput comparable to ISS, RCC, Ladon, and Orthrus, while sustaining consistently low latency.

\begin{figure}[t]
        \includegraphics[width=0.5\textwidth]{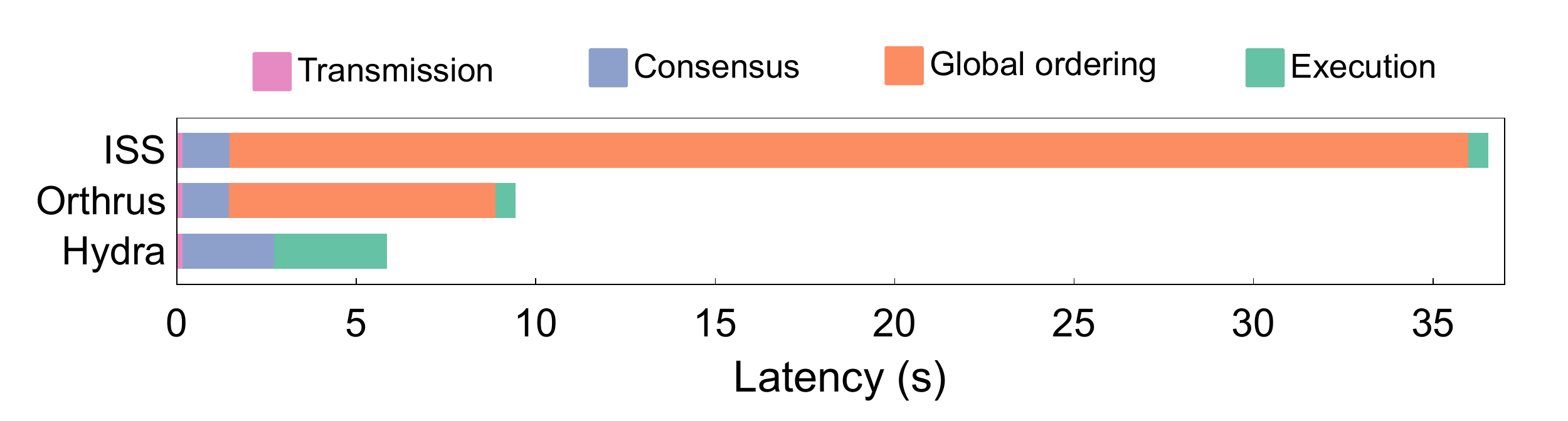}
    \caption{Breakdown of latency in ISS, Orthrus,  and \sysname.}
    \label{fig:breakdown}
\end{figure}

\bheading{Latency breakdown.} 
To better understand the source of performance differences, we present a detailed latency breakdown of ISS, Orthrus and \sysname under one straggler with 16 replicas in WAN. 
We select ISS and Orthrus because ISS represents state-of-the-art Multi-BFT protocols with pre-determined global ordering, while Orthrus illustrates dynamic and hybrid ordering mechanisms that mitigate straggler effects.
The total latency is divided into four stages: transmission, consensus, global ordering, and execution.

As shown in \figref{fig:breakdown}, the total latency of ISS is dominated by the global ordering phase, which accounts for 91.5\% of its total latency. 
Although Orthrus reduces this cost by 78.5\%, it still constitutes {69.3\%} of total latency. In contrast, \sysname eliminates global ordering entirely, resulting in significantly lower overall latency. Specifically, \sysname reduces end-to-end latency to {7.73s}, outperforming ISS and Orthrus by {79.5\%} and {27.9\%}, respectively.

Among the remaining components, the consensus latency of \sysname is slightly higher than in ISS and Orthrus. This is because we account for the time between receiving transactions and packing them into blocks as part of the consensus phase. Furthermore, since some transactions are assigned to multiple instances in \sysname, the preprocessing overhead for these transactions is slightly larger.
Notably, \sysname incurs a slightly higher execution cost due to lock acquisition and deadlock resolution. 
However, this overhead is far outweighed by the savings from removing global ordering.
Overall, these results highlight the core advantage of \sysname's design: by removing global ordering and enabling concurrent, lock-based execution, it substantially improves responsiveness without compromising consistency.

\subsection{Impact of Cross-Instance Transactions}\label{sec:hydraproportion}
\begin{figure}[t]
\vspace{-4mm}
	\centering
    \begin{subfigure}[t]{0.24\textwidth}
	\centering
        \includegraphics[width=1.05\linewidth]{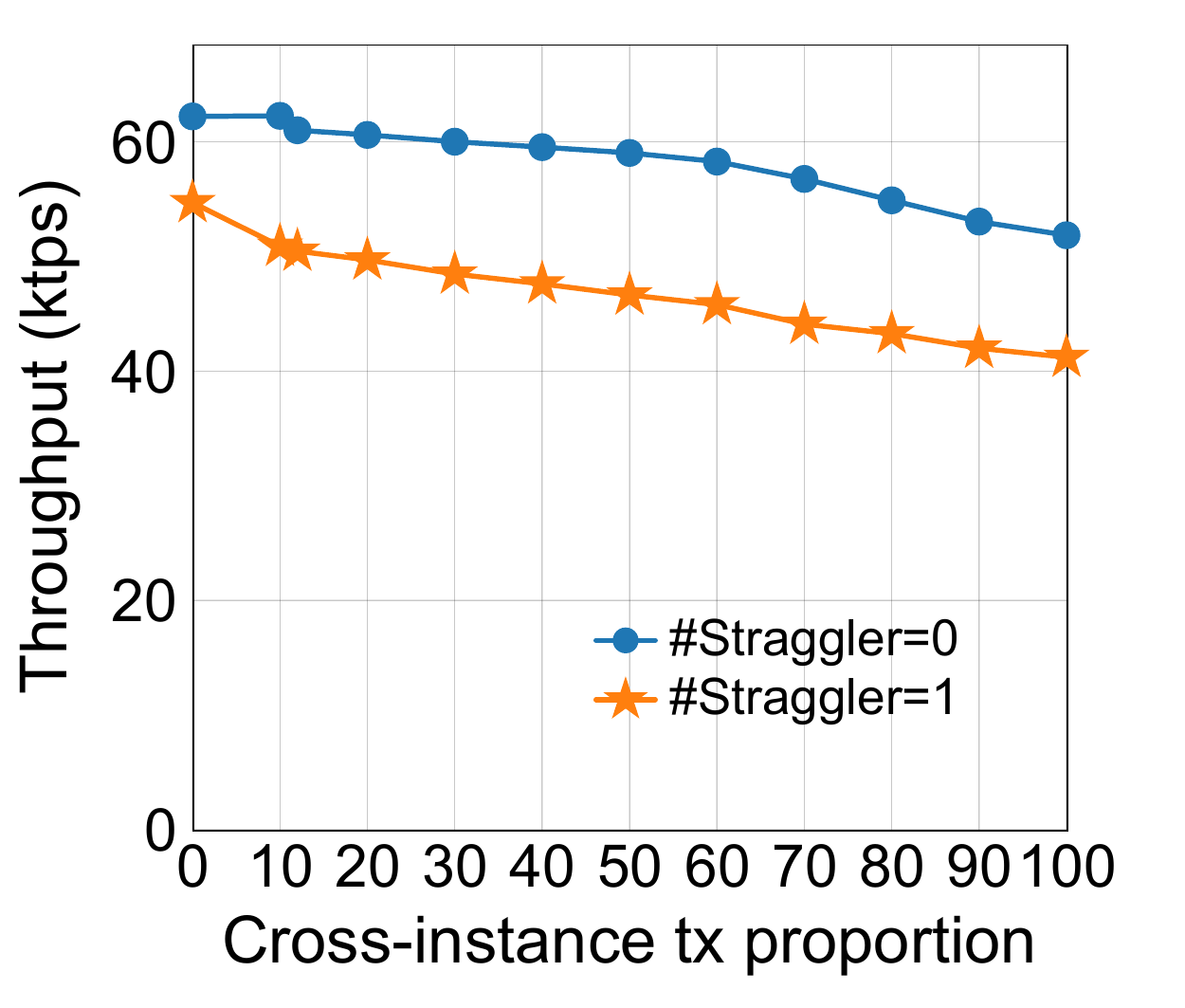}
        \caption{{Throughput}}
    \end{subfigure}
    \hfill
    \begin{subfigure}[t]{0.24\textwidth}
	\centering
        \includegraphics[width=1.05\linewidth]{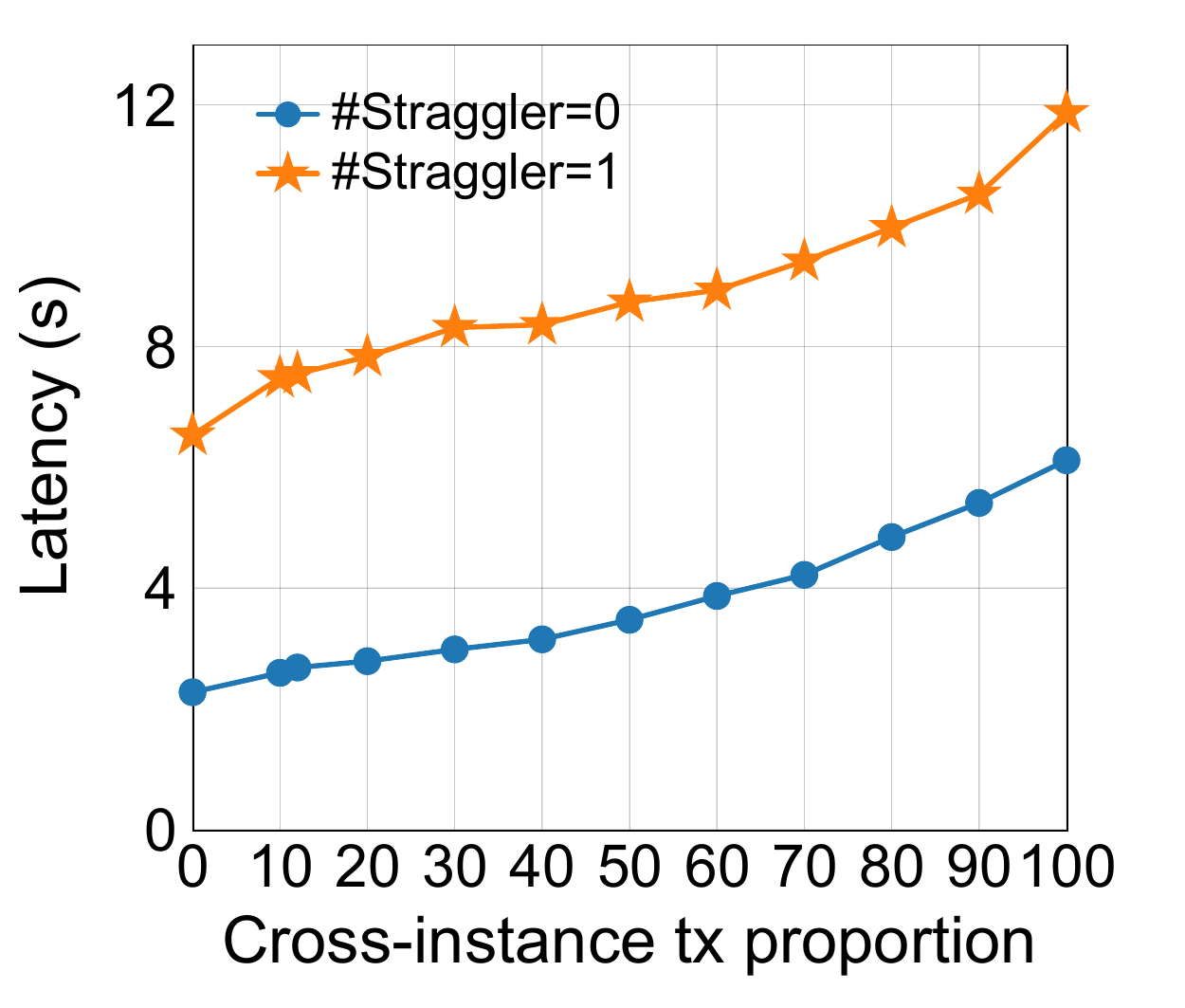}
        \caption{{Latency}}
    \end{subfigure}
    
    \caption{Throughput and latency of \sysname under different cross-instance transactions proportions in WAN.}
	\label{fig:proportionTest}
    \vspace{-4mm}
\end{figure}

\figref{fig:proportionTest} shows the throughput and latency of \sysname with different proportions of cross-instance transactions with 16 replicas in WAN. We only report WAN results here, as the LAN experiments exhibit the same overall trends.

As the proportion of cross-instance transactions increases, \sysname experiences performance degradation, indicating that coordination across instances introduces additional lock contention and higher chances of deadlock resolution.

With one straggler, when the proportion grows from 0\% to 100\%, throughput decreases by 24.6\% and latency increases by 81.7\%.
Without stragglers, the trend is similar: throughput decreases by 16\% and latency increases by 1.7×. 
In both cases, the changes are relatively gradual, showing that the cost of handling cross-instance dependencies grows moderately and does not cause severe performance degradation.

\subsection{Performance under Faults}\label{sec:hydrafaults}
\bheading{Crash faults.}
We evaluate the performance of \sysname under crash faults in a WAN setting with 16 replicas, considering three scenarios with 0, 1, and 5 faulty replicas.
Faults are injected at approximately 15s, and detected within a few seconds, triggering a view change after which performance stabilizes.

\figref{fig:detectable} shows the average throughput and latency over time. 
With one faulty replica, latency exhibits a short spike, peaking at 8.2s, and then stabilizes.
In steady state (after view change), average latency increases modestly to 2.7s (about 8.7\% higher than the pre-fault value), while throughput declines slightly by around 3-4\%.
This is because most instances continue processing normally, and only the instance led by the faulty replica temporarily stalls.

With five faulty replicas, the effect is more pronounced.
Latency briefly spikes to 12s then settles to a higher steady-state level of 2.8s, roughly 15\% higher than in the pre-fault state.
Throughput also decreases by about 75\% immediately after injection, and stabilizes at a steady-state throughput approximately 30\% lower than normal.
Even so, unaffected instances maintain stable progress and \sysname fully recovers performance once the view change completes.

Overall, these results demonstrate that failures do not cascade across instances, and \sysname remains resilient with bounded performance degradation even under multiple faults.

\begin{figure}[t]
	\centering
    \begin{subfigure}[t]{0.5\textwidth}
	\centering
        \includegraphics[width=\linewidth]{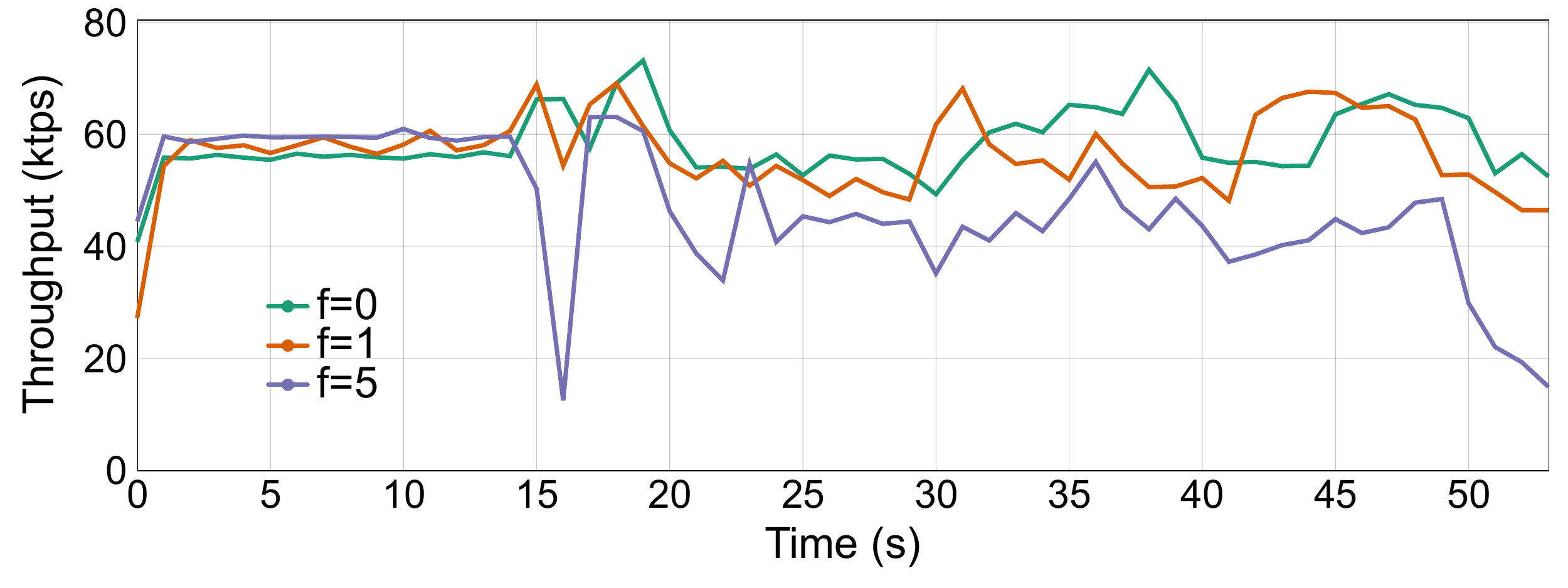}
        \caption{{Throughput average  (over 1s intervals) over time.}}
	\label{fig:detectablea}
    \end{subfigure}
    \hfill
    \begin{subfigure}[t]{0.5\textwidth}
	\centering
        \includegraphics[width=\linewidth]{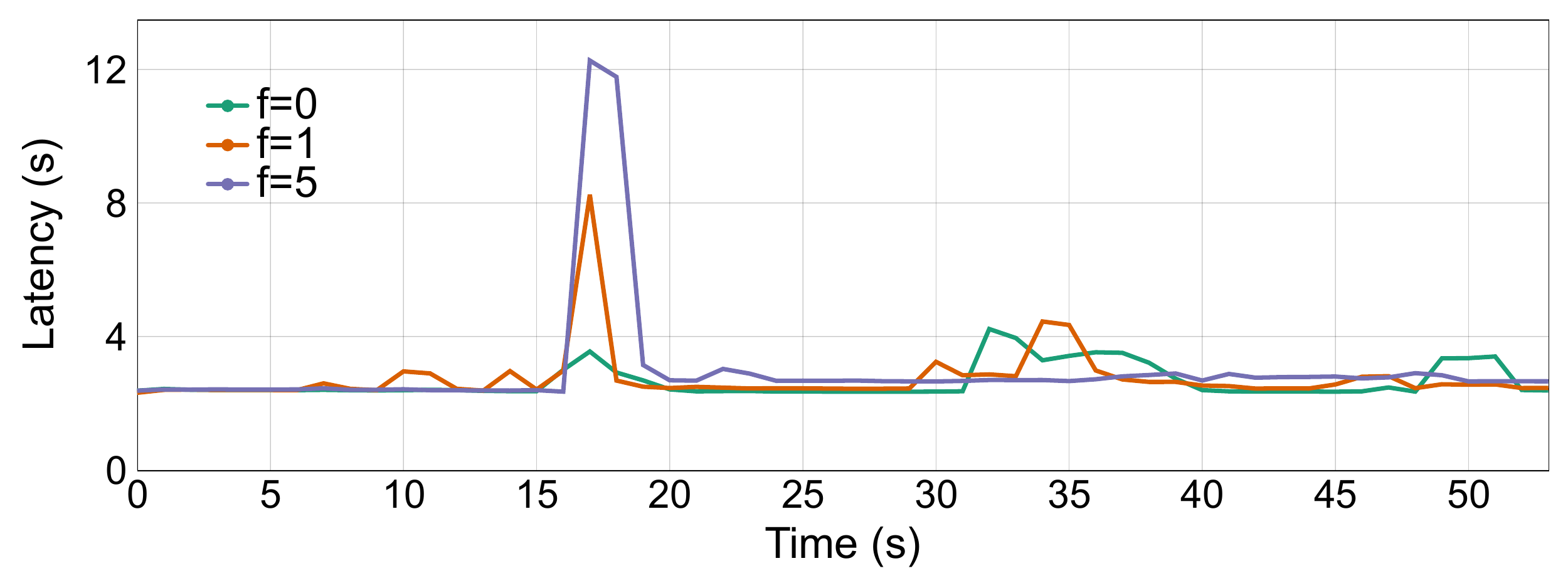}
        \caption{{Latency average  (over 1s intervals) over time.}}
        	\label{fig:detectableb}
    \end{subfigure}
    \caption{Throughput and latency average of \sysname over time with 0, 1, 5 faults. The faults occur at 15 seconds.}
	\label{fig:detectable}
\end{figure}

\begin{table}[t]
\centering
\caption{Throughput and latency under different adversarial scenarios. BL: Byzantine Leader, MC: Malicious Clients.}
\label{tab:failure}
\begin{tabular}{l c c c c}
\toprule
\textbf{Metric / Scenario} 
    & Failure-free & MC & BL & BL + MC \\
\midrule
Throughput (ktps) & 61.50 & 57.01 & 60.13 & 52.86 \\
Latency (s)        & 2.69  & 3.26  & 3.02  & 3.60 \\
\bottomrule
\end{tabular}
\end{table}

\bheading{Byzantine faults.}
Table~\ref{tab:failure} summarizes system performance under different adversarial scenarios, including high-contention workloads generated by malicious clients (MC), Byzantine leader behavior (BL), and the combination of both (BL+MC). In the BL scenario, we inject a single Byzantine leader among 16 replicas, which deliberately prolongs lock holding and proposes adversarial transaction orderings to induce cross-instance deadlocks.
In the MC scenario, malicious clients generate high-contention workloads by repeatedly submitting transactions involving the same set of 20 objects to different instances in adversarial orders, intentionally inducing cross-instance deadlocks and increasing coordination overhead. 
Each scenario captures realistic adversarial strategies that may increase lock contention or coordination delays, as discussed in Section~\ref{sec:byzantine}.

As the results show, although adversarial behaviors can reduce throughput and increase latency, the impact on Hydra is limited. 
Malicious clients submitting high-contention transactions reduce throughput by 7.3\% and increase latency by 21.2\%. Byzantine leaders attempting to manipulate locks or propose conflicting orders cause a smaller throughput reduction of 2.2\% and a latency increase of 12.3\%. When both adversarial behaviors are combined (BL+MC), throughput decreases by 14.0\% and latency increases by 33.8\%.
These results indicate that \sysname's locking and deadlock handling mechanisms effectively contain the effects of adversarial manipulation, restricting the impact to performance degradation.

\subsection{Memory Usage}\label{sec:mem}

\begin{table}[t]
\centering
\caption{Memory usage and utilization under different numbers of instances. Each instance: 8 vCPUs and 16~GB RAM.}
\label{tab:instance-memory}
\begin{tabular}{l c c c c}
\toprule
\textbf{\# Instances} & 8 & 16 & 32 & 64 \\
\midrule
\textbf{Memory Usage (MB)} & 11046 & 10648 & 10549 & 10560 \\
\textbf{Utilization} & 67.4\% & 65.0\% & 64.4\% & 64.5\% \\
\bottomrule
\end{tabular}
\end{table}

To evaluate the scalability costs of maintaining a fully replicated state, we measure the memory usage of \sysname while scaling the number of consensus instances from 8 to 64. 
Table~\ref{tab:instance-memory} reports the memory usage and utilization per instance. We observe that the memory utilization remains stable, ranging from 64.4\% to 67.4\%, despite an 8$\times$ increase in the number of instances. This indicates that \sysname exhibits near-constant memory scaling with respect to the number of instances, and that fully replicated state does not become a scalability bottleneck as the system scales horizontally.

This behavior is attributed to \sysname's epoch-based processing. At the end of each epoch, replicas perform checkpointing, truncate committed logs, and apply garbage collection to reclaim memory. As a result, the in-memory state maintained by each replica is bounded by the most recent checkpoint and does not grow unbounded over time, ensuring sustainable memory usage for long-running deployments.

\section{Discussions} \label{sec:disscuss}

\bheading{Comparison with DAG-based protocols.} 
Both Multi-BFT and DAG-based protocols aim to improve throughput by exploiting parallelism. In Multi-BFT, each block in a round references a single block from the previous round, whereas DAG-based protocols typically reference at least $2f+1$ blocks from the previous round. DAG-based systems require a global ordering to ensure consistency, which can become a performance bottleneck in the presence of slow replicas or stragglers. Moreover, DAGs may suffer from transaction duplication due to multiple block proposers each round. 

\bheading{Comparison with sharding protocols.} 
Sharding-based protocols increase parallelism by partitioning both replicas and transactions across multiple shards. However, this design introduces additional coordination overhead for cross-shard transactions, as operations spanning multiple shards require multi-shard atomicity protocols. 
\sysname’s core advantage lies in its communication-free cross-instance execution model, which fundamentally differentiates it from existing approaches. 
This point is further elaborated in the Appendix C of \cite{hydralong}.

\bheading{Distributed transaction.} 
The key of distributed transaction processing is to ensure atomicity across partitioned data, typically through coordination protocols such as 2PC~\cite{al2017chainspace,omniledger}.
While effective, these protocols incurs at least two rounds of communication and may block if the coordinator fails or if some shards are slow to respond. In contrast, \sysname operates in a fully replicated model where every replica maintains the complete object state. Cross-instance transactions are executed atomically through local locking, avoiding explicit inter-instance coordination and significantly reducing communication overhead while preserving deterministic consistency.

\bheading{Transaction partition strategies.}
\sysname does not assume a specific transaction partitioning algorithm, but instead is designed to operate correctly and efficiently under a wide range of existing partitioning strategies.
For example, a simple and widely adopted approach assigns objects to instances by hashing their keys, \ie, ${insIndex} = \mathsf{Hash}({key}) \bmod m$.
A transaction is then routed to the instances responsible for the objects it accesses.
This simple strategy 
achieves good load balance in expectation, but it may increase the frequency of cross-instance transactions when objects that frequently interact are mapped to different instances.

To mitigate this effect, \sysname can naturally benefit from prior work such as TxAllo~\cite{zhang2023txallo}, which reduces the frequency of cross-instance transactions while maintaining a balanced load across instances.
By co-locating frequently interacting objects when possible, such approaches lower cross-instance coordination overhead and deadlock probability, thereby improving overall throughput.

Importantly, these partitioning strategies are orthogonal to the core design of \sysname.
The system’s correctness guarantees hold under any deterministic partitioning scheme agreed upon by all replicas.

\section{Related Work}\label{sec:related}

\bheading{Multi-BFT consensus.}
The foundation of Byzantine fault-tolerant consensus was established by Castro and Liskov in PBFT~\cite{pbft1999}, which inspired a long line of practical protocols such as Zyzzyva~\cite{kotla2007zyzzyva}, Tendermint~\cite{Buchman2016TendermintBF}, and HotStuff~\cite{hotstuff}.
These leader-based protocols streamline the process of reaching agreement but inevitably concentrate decision power on a single replica, resulting in a leader bottleneck that constrains throughput and scalability~\cite{alqahtani2021bottlenecks,charapko2021pigpaxos,gai2023scaling,kang2025hotstuff,gupta2023dissecting}.

To alleviate this limitation, Multi-BFT consensus architectures were introduced, allowing replicas to run several consensus instances concurrently and thereby removing the dependence on a single leader~\cite{stathakopoulou2022state, gupta2021rcc, MIR-BFT, Ladon2025,kang2024spotless}.
Among early designs, Mir-BFT~\cite{MIR-BFT} pioneered this direction by executing multiple instances in parallel and establishing a global order of blocks based on pre-assigned indices.
However, because Mir-BFT triggers an epoch change whenever any instance leader behaves incorrectly, the system remains vulnerable to Byzantine disruptions.
ISS~\cite{stathakopoulou2022state} refined this model by introducing no-op deliveries to mitigate unnecessary epoch changes, while RCC~\cite{gupta2021rcc} similarly relies on a static global order of outputs.
In both cases, a slow or faulty instance can delay epoch progression for all replicas, creating performance bottlenecks.

Other variants attempt to optimize the global ordering process.
For example, DQBFT~\cite{dqbft} dedicates a separate BFT instance to globally order outputs from other instances, simplifying coordination but making the system more prone to targeted attacks~\cite{estrada2006network}.
Ladon~\cite{Ladon2025} introduces monotonic ranking to dynamically determine cross-instance ordering, mitigating straggler-induced delays. Building upon this, Orthrus~\cite{Orthrus2025} further introduces a fast path for independent transactions, enabling them to bypass the global ordering phase, while still maintaining global consistency for dependent ones. 
Although these techniques alleviate synchronization overheads, the global ordering phase still constitutes a large fraction of overall system latency.

Unlike the systems above, TELL~\cite{tong2024tell} shifts optimization efforts to the execution layer instead of the consensus protocol itself.
It employs a State Hash Table (SHT) to track read/write dependencies, enabling concurrent execution both across instances and within blocks.
Conflicts are handled by selective re-execution and periodic merging of instance states at the epoch boundary.
While this strategy reduces execution wait time, its improvement in end-to-end latency remains limited due to the overhead of reprocessing conflicting transactions.

\bheading{Partial ordering design.}
A complementary research direction focuses on partially ordered execution to relax the global ordering requirement, particularly within payment-oriented systems.
CryptoConcurrency~\cite{tonkikh2023cryptoconcurrency} dynamically detects overspending by checking account balances, enabling concurrent transactions without full consensus.
Astro~\cite{collins2020online} maintains per-client logs to prevent double spending, while ABC~\cite{sliwinski2021asynchronous} allows validators to process transactions in parallel without global coordination, thereby enhancing efficiency.
FastPay~\cite{baudet2020fastpay} leverages payment semantics to minimize shared state between accounts, facilitating asynchronous execution with high concurrency.
Flash~\cite{lewis2023flash} bypasses reliable broadcast through a DAG-based, partially ordered structure.
A non-sequential model for monetary transfers is proposed in~\cite{auvolat2020money}, based on reliable broadcast abstraction.
Pastro~\cite{kuznetsov2023permissionless} further defines a partially ordered transaction set that determines active participants and stake distribution, offering flexibility for applications that do not depend on a total order.
While these approaches achieve efficient execution for payment systems, they cannot generally support complex smart contract semantics. 

\bheading{Hybrid ordering design.}
Recent advances in blockchain systems have explored hybrid ordering mechanisms that selectively apply global ordering only when necessary~\cite{blackshear2023sui,babel2025mysticeti}.
Sui Lutris~\cite{blackshear2023sui} introduces a dual-path model, where single-owner transactions follow a lightweight fast path, while shared-object transactions are ordered through a consensus path. This design reduces latency for independent operations but depends on client-side orchestration, which requires gathering additional certificate signatures and performing periodic checkpointing to ensure finality.
Mysticeti~\cite{babel2025mysticeti} generalizes the same concept using a DAG-based consensus protocol, offering a tighter integration between the fast and consensus paths. However, Mysticeti still separates execution and finality: locally executed transactions may later be reverted if not sufficiently confirmed at epoch boundaries, complicating persistence across epochs.
Orthrus~\cite{Orthrus2025} extends this hybrid ordering paradigm into a Multi-BFT framework while retaining a standard BFT core. It introduces a fast path where conflict-free transactions can be finalized immediately upon instance commitment, thus eliminating deferred confirmation and preventing rollback between epochs.
Thunderbolt~\cite{chen2024thunderbolt} introduces a hybrid ordering mechanism that deterministically orders cross-shard transactions while preserving partial concurrency for single-shard execution, achieving both high scalability and consistency across shards.

\section{Conclusion}\label{sec:conclusion}
We presented \sysname, a new Multi-BFT consensus architecture that eliminates global ordering while preserving safety and liveness. By adopting an object-centric execution model and enforcing only per-object ordering, \sysname decouples the progress of parallel BFT instances and unlocks significantly higher concurrency. Our locking mechanism and deterministic deadlock handling further ensure that cross-instance execution remains correct without additional coordination rounds.
We implemented \sysname and evaluated it under both WAN and LAN environments. The results demonstrate that \sysname sustains high throughput and low latency even in the presence of stragglers, outperforming state-of-the-art Multi-BFT protocols by up to 9× in throughput. 

\clearpage
\section{AI-Generated Content Acknowledgment}
Portions of this paper were assisted by the use of the GPT-5 model from OpenAI. Specifically, AI assistance was used for language refinement. All conceptual contributions, algorithmic designs, experimental methodologies, and scientific claims are solely made by the authors. The authors reviewed and verified the correctness of all AI-assisted content.
\normalem
\bibliographystyle{unsrt}
\bibliography{bib}

\appendices

\section{Correctness Analysis} 
We prove the safety and liveness properties of \sysname. 
We first show that all honest replicas maintain consistent decisions on transaction aborts during deadlock detection and resolution.
We then prove that any two honest replicas that reach the same delivery frontier apply identical updates to every object, thereby guaranteeing state consistency and overall system safety.
Finally, we demonstrate that every transaction broadcast by a correct client will eventually either be executed or aborted, thereby ensuring liveness.

\begin{lemma}\label{lemma:deadlockgroup}
All honest replicas will compute the same deadlock group \( D \) when executing $\mathsf{expandDeadlockGroup}$(\( tx \)).
\end{lemma}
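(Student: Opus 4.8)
The plan is to reduce the statement to a pure determinism claim: $\mathsf{expandDeadlockGroup}(tx)$ reads only two pieces of state, namely the per-instance ordered logs $log[\cdot]$ and the object-to-instance map $\mathsf{assign}$, and both are identical across honest replicas. If every primitive the procedure consults, $\mathsf{prefix}[i](\cdot)$, $\mathsf{suffix}[j](\cdot)$, and $\mathsf{instances}(\cdot)$, evaluates identically on all honest replicas, then since the control flow is a deterministic fixed-point iteration, the returned set $D$ must coincide.

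First I would fix the inputs. Each instance is an SB instance, so its Agreement property guarantees that honest replicas deliver identical messages at every sequence number; hence all of them hold the same per-instance total order, and for any transaction $t$ delivered in instance $k$ the sets $\mathsf{prefix}[k](t)$ and $\mathsf{suffix}[k](t)$, determined solely by $t$'s position in that order, are replica-independent. Since $\mathsf{assign}$ is a pure function of an object key, $\mathsf{instances}(t)$ is replica-independent as well. Next I would handle the root call $\mathsf{findOrderingConflicts}(tx)$: its two branches that add a candidate $tx'$ collapse into a single progress-independent predicate, because $tx' \in \mathsf{suffix}[j](tx)$ already implies $tx' \notin \mathsf{prefix}[j](tx)$ and $j \in \mathsf{instances}(tx')$. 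Thus $tx'$ is added exactly when $tx' \in \mathsf{prefix}[i](tx) \wedge tx' \notin \mathsf{prefix}[j](tx) \wedge j \in \mathsf{instances}(tx')$. Because $tx$ is confirmed it is delivered in every instance of $\mathsf{instances}(tx)$, so all the prefixes above are taken over the same agreed orders and the computed conflict set is identical everywhere, independent of how far each local log has advanced.

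To lift this to the full loop I would argue by induction on the repeat-until rounds. The base case $D=\{tx\}$ is trivially shared. In the inductive step, every newly added transaction $tx'$ comes from some $\mathsf{prefix}[i](t)$ and is therefore delivered in at least one of its instances, and the expansion operator is monotone and bounded by the finite set of epoch transactions; hence the loop computes the least fixed point containing $\{tx\}$ under a replica-independent operator, which is unique. Combining this with the per-call determinism of the previous paragraph yields a common $D$.

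I expect the main obstacle to be the recursive evaluation of $\mathsf{findOrderingConflicts}(tx')$ when a pivot $tx'$ is delivered in some but not all of its instances. The progress-independent rewriting above relied on the pivot $tx$ being delivered in every relevant instance; for such a $tx'$ the test $tx'' \notin \mathsf{prefix}[j'](tx')$ is read from a log in which $tx'$'s own position is not yet fixed, so a replica that has already delivered $tx'$ in $j'$ and one that has not can, a priori, disagree on whether a $tx''$ that is in fact consistently ordered before $tx'$ belongs to the group. The way I would close this gap is to interpret $\mathsf{prefix}[\cdot]$ and $\mathsf{suffix}[\cdot]$ with respect to the eventual SB-agreed per-instance total order rather than the momentary partial log, and then show that the returned group is a function only of these agreed orders together with $\mathsf{assign}$; the confirmation trigger and epoch-boundary completeness (every transaction is confirmed or aborted by the end of an epoch, so the relevant orderings are stable before the group is finalized) are what guarantee that each honest replica evaluates the procedure against the same stable orderings. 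Making this stabilization argument precise---rather than the routine determinism bookkeeping---is the crux of the proof.
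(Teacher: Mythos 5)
Your proof follows the same skeleton as the paper's: SB Agreement pins down the per-instance prefixes of the confirmed root transaction, the group is grown round by round under the induction hypothesis that $D_n$ is common to all honest replicas, and termination follows from monotonicity over the finite set of epoch transactions. Your observation that the two branches of $\mathsf{findOrderingConflicts}$ collapse, for a confirmed pivot, into the single progress-independent predicate $tx' \in \mathsf{prefix}[i](tx) \wedge tx' \notin \mathsf{prefix}[j](tx) \wedge j \in \mathsf{instances}(tx')$ is a sharper justification of the root call than the paper itself gives.

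However, the step you explicitly leave open---recursive calls $\mathsf{findOrderingConflicts}(t)$ on pivots $t$ that are delivered in only some of their instances---is a genuine gap, and your proposed stabilization argument does not yet close it. The difficulty is concrete: the pending branch ($tx'' \notin \mathsf{prefix}[j'](t) \wedge j' \in \mathsf{instances}(tx'')$) is evaluated against the momentary local log of instance $j'$. A replica that has already delivered $tx''$ before $t$ in $j'$ will not add $tx''$, while a replica whose log of $j'$ is shorter sees $tx''$ as pending and adds it; the resulting groups differ, and since $\mathsf{selectVictims}$ is a function of $D$, the victim sets (and hence the abort consistency of Lemma~\ref{lemma:abort}) can differ as well. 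Your suggested fix---reinterpreting $\mathsf{prefix}$ and $\mathsf{suffix}$ with respect to the eventual agreed orders and invoking epoch-boundary completeness---changes what the algorithm computes rather than proving something about what it computes: Algorithm~\ref{algorithm:deadlock} is triggered at confirmation time, when the relevant orders are not yet stable, so you would still need to show either that the computed group is invariant to local progress (the scenario above shows it is not, a priori) or that detection is deferred or re-evaluated at a synchronization point where all honest replicas provably read identical logs. For what it is worth, the paper's own proof does not close this gap either; it asserts that ``all logs are consistent at the relevant positions (due to prior delivery),'' which covers only positions already delivered everywhere and silently skips exactly the pending-branch case you isolated. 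So your diagnosis of the crux is correct and sharper than the paper's treatment, but the proof as proposed is incomplete at precisely that point.
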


\begin{proof}
We prove this by induction on the recursive expansion process of the deadlock group.

\bheading{Notation.}
Let \( I \) denote the set of instances assigned to \( tx \).  
Let \( plog[i] \) be the instance log at instance \( i \), and \( prefix[i](tx) \) denote the set of transactions that appear before \( tx \) in \( plog[i] \).  
Let \( \prec_i \) denote the order of transactions in \( plog[i] \). We say that \( tx' \) and \( tx \) form a deadlock if there exist \( i, j \in I \) such that \( tx' \prec_i tx \) and \( tx \prec_j tx' \).

\bheading{Base Case.}
Since \( tx \) has been delivered by all instances in \( I \), by the Agreement property of the underlying SB protocol, all honest replicas share the same prefix \( prefix[i](tx) \) for each \( i \in I \).  
This ensures that any transaction \( tx' \) occurring before \( tx \) in any \( plog[i] \) is consistently visible to all replicas.

\bheading{Step 1: Identifying Immediate Conflicts.}
Suppose \( tx' \) forms a deadlock with \( tx \), \ie, \( tx' \prec_i tx \) for some instance \( i \) and \( tx \prec_j tx' \) for some instance \( j \).  
Then \( tx' \in prefix[i](tx) \), and hence is visible to all honest replicas.  
Moreover, since \( tx' \) is part of a deadlock cycle, it cannot have completed execution (as completed transactions release all locks and are no longer pending).  
Therefore, \( tx' \) will be added to the deadlock group by all honest replicas.

\bheading{Step 2: Inductive Expansion.}
Let \( D_n \) be the deadlock group after \( n \) expansion steps, assumed to be identical at all honest replicas.  
In the \( (n+1) \)-th step, for each \( t \in D_n \), the procedure $\mathsf{findOrderingConflicts}$(\( t \)) scans the logs to identify transactions \( t' \) that conflict with \( t \) based on relative ordering across instances.  
Since all logs are consistent at the relevant positions (due to prior delivery), all honest replicas will observe the same conflicting transactions \( t' \) and add them to the group.  
Thus, \( D_{n+1} \) remains identical across replicas.

\bheading{Conclusion.}
As the expansion process is monotonic and bounded over a finite set of transactions within the epoch, it terminates after a finite number of steps.  
All honest replicas will therefore compute the same final deadlock group \( D\) when invoking $\mathsf{expandDeadlockGroup}$(\( tx \)).  
\end{proof}

\begin{lemma}\label{lemma:abort}
If an honest replica aborts a transaction \( tx \), then all honest replicas will abort \( tx \).
\end{lemma}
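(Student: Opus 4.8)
The plan is to show that every abort decision in \sysname is a deterministic function of data on which all honest replicas already agree, so that identical inputs force identical abort outputs. The first step is to enumerate the only two routes by which \sysname aborts a transaction: (i) deadlock resolution in Algorithm~\ref{algorithm:deadlock}, where $tx$ is selected as a victim by $\mathsf{SelectVictims}$; and (ii) the end-of-epoch rule described in \secref{sec:hydraexecuteAlgorithm}, where a transaction that remains unconfirmed---because it was delivered by only a strict subset of its target instances---is aborted when the epoch closes. Note that the execution-time $\mathsf{rollback}$ of Algorithm~\ref{algorithm:hydraexe} produces a \textsc{failure} reply for a \emph{confirmed} transaction and is therefore not an abort; consistency of that outcome follows from deterministic execution and belongs to the later state-consistency argument. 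It then suffices to prove consistency for each of the two abort routes separately.

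For route (i), I would invoke Lemma~\ref{lemma:deadlockgroup}: when any honest replica runs $\mathsf{ExpandDeadlockGroup}(tx)$ it obtains the \emph{same} deadlock group $D$. The victim list is produced by $\mathsf{SelectVictims}(D)$, which repeatedly removes the member of $D$ with the smallest transaction index (the hash of the transaction) while the predicate $\mathsf{hasDeadlock}(D)$ holds. The key point is that both the ordering relations used to evaluate $\mathsf{hasDeadlock}$ and the indices used to break ties are \emph{replica-independent}: the pairwise orderings among transactions in $D$ are read off the delivered instance logs, which agree across honest replicas by the Agreement property of SB, and the transaction hash is a global property of the transaction. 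Hence at every iteration each honest replica evaluates $\mathsf{hasDeadlock}$ identically and removes the same smallest-index victim, so the returned $victimList$ is identical. Consequently, if one honest replica aborts $tx$ via this route, $tx \in victimList$ at every honest replica and all of them abort it.

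For route (ii), I would argue that whether $tx$ is confirmed at the close of an epoch is itself a replica-consistent predicate. A transaction is confirmed exactly when all of its target instances have delivered the relevant block (Algorithm~\ref{algorithm:hydramain}, Lines~24--26), and by SB Agreement every honest replica delivers the same blocks with the same contents in each instance. The checkpoint protocol of \secref{sec:hydracheckpoint} fixes a common epoch boundary: once $2f+1$ matching checkpoint messages form a stable checkpoint, all honest replicas share the identical set of blocks, hence the identical set of delivered transactions, attributed to that epoch. Therefore the set of transactions still unconfirmed at the epoch boundary is the same at every honest replica, and any such $tx$ aborted by one honest replica is aborted by all.

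The main obstacle is the first step---cleanly characterizing \emph{all} ways an abort can arise and separating genuine aborts from execution-time rollbacks---together with establishing that $\mathsf{SelectVictims}$ depends only on quantities honest replicas already agree upon. The victim loop is order-sensitive, since removing one transaction can change $\mathsf{hasDeadlock}$, so I must ensure that the tie-breaking index is globally fixed and that the intermediate wait-for relation is reconstructed solely from the agreed logs rather than from any locally observed timing. Once this is nailed down, the determinism of the loop together with Lemma~\ref{lemma:deadlockgroup} closes route (i), and SB Agreement with checkpoint agreement closes route (ii).
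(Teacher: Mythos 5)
Your proposal is correct and follows essentially the same route as the paper's proof: the identical two-case decomposition (end-of-epoch unconfirmed abort via SB Agreement, and deadlock-victim abort via Lemma~\ref{lemma:deadlockgroup} plus the determinism of $\mathsf{SelectVictims}$). Your additional refinements---explicitly excluding execution-time $\mathsf{rollback}$ from the notion of abort, noting the order-sensitivity of the victim loop, and anchoring the epoch boundary with the checkpoint protocol---are sound elaborations of steps the paper treats more briefly, not a different argument.
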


\begin{proof}
A transaction \( tx \) can be aborted for one of two reasons:
(1) it is not confirmed (\ie, delivered by all of its target instances) before the end of an epoch, or
(2) it is involved in a deadlock cycle and is selected as a victim.

\bheading{Case 1: Unconfirmed.}
Let \( I \) be the set of instances \( tx \) is assigned to.  
By the {Agreement} property of the SB protocol, if \( tx \notin plog[i] \) at one honest replica for some \( i \in I \), then it is not present in \( plog[i] \) at any honest replica.  
Hence, at the end of the epoch, all honest replicas will observe that \( tx \) is not fully delivered and will independently abort \( tx \).  
This decision is thus consistent.

\bheading{Case 2: Deadlock Resolution.}
In this case, \( tx \) is aborted due to a deadlock. By Lemma~\ref{lemma:deadlockgroup}, all honest replicas will compute the same deadlock group \( D \) by invoking $\mathsf{expandDeadlockGroup}$(\( tx \)).  
All honest replicas apply the same deterministic victim selection rule (\eg, selecting the lowest-indexed transactions) in $\mathsf{selectVictims}$(\( D \)).  Since both \( D \) and the selection rule are consistent across replicas, the resulting victim set is identical at all honest replicas.
Therefore, if \( tx \in \mathsf{selectVictims}(D) \) at one honest replica, then all replicas will also abort \( tx \).

\bheading{Conclusion.}
In both cases, if one honest replica aborts \( tx \), all honest replicas will make the same decision.
\end{proof}

\begin{theorem}[Safety]    
If two honest replicas reach the same state \( S \), they must have identical values for every key in \( S \). 
\end{theorem}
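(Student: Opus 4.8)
The plan is to reduce the global statement about object values to a family of per-object claims, and then discharge each by an induction that rides on the consistency guarantees already established. The starting point is that the shared state $S = (sn_0, \dots, sn_{m-1})$ pins down, for each instance $i$, exactly how far $plog[i]$ has advanced. By the Agreement property of SB, any two honest replicas that have reached sequence number $sn_i$ in instance $i$ hold identical log prefixes $plog[i]$; since this holds for every $i$, the two replicas agree on the entire collection of delivered blocks, hence on the set of delivered transactions and on the per-instance orders $\prec_i$. From this I would derive that both replicas recognize exactly the same set of \emph{confirmed} transactions (a transaction is confirmed iff all of its target instances have delivered it) and, invoking Lemma~\ref{lemma:abort}, the same set of \emph{aborted} transactions. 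Consequently the set of committed transactions---those that are confirmed and not aborted---is identical at the two replicas.

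Next I would exploit object-centric ownership: each object $o$ is mapped by the deterministic $\mathsf{assign}$ to a unique home instance $i = \mathsf{assign}(o)$, so every operation on $o$ originates from a transaction in $plog[i]$. I would define the \emph{operation sequence} of $o$ as the list of operations contributed by committed transactions touching $o$, ordered by $\prec_i$ (and within a transaction by DAG-topological order). By the previous paragraph this sequence is identical at both replicas. The lock discipline of Algorithm~\ref{algorithm:hydraexe} guarantees that operations on $o$ are actually applied in exactly this order: instance $i$ serves its log via $\mathsf{firstPending}$, so it locks $o$ for transactions in $\prec_i$ order, and no transaction touches $o$ until it holds every lock it needs. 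Thus the runtime schedule, however it interleaves across instances, is conflict-equivalent to the serial order induced by the per-instance logs.

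The core of the argument is an induction showing that this conflict-equivalent schedule yields deterministic object values. I would first observe that deadlock resolution rules out the only way the per-object orders could fail to extend to a global serialization: if two committed cross-instance transactions $tx, tx'$ shared objects in instances $i$ and $j$ with $tx \prec_i tx'$ but $tx' \prec_j tx$, they would form a cycle, and by Lemmas~\ref{lemma:deadlockgroup} and~\ref{lemma:abort} one of them would be consistently aborted at all honest replicas, contradicting that both are committed. Hence the relative order of any two committed transactions agrees across all objects they share, and the per-object sequences embed into a single serialization order $\pi$. I would then induct along $\pi$: the base case is the common initial object state, and at each step the next transaction acquires all its locks, executes its vertices in topological order, and either commits its writes or rolls back entirely. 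Because the condition $c$ on each vertex is evaluated against object values that are identical at both replicas by the induction hypothesis, both replicas reach the same success/failure verdict and apply the same net update. Since both replicas process $\pi$ identically from an identical initial state, they arrive at identical values for every key named in $S$.

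I expect the main obstacle to be making the serializability step fully rigorous rather than merely plausible. Two difficulties compound here. First, the execution is genuinely concurrent, so I must argue at the level of conflict-equivalence---each object's conflicts are ordered by its home log---rather than from a literal shared schedule, and I must confirm that the atomic, all-or-nothing application of a multi-object transaction is compatible with that per-object ordering. Second, the success/failure outcome of a transaction feeds back into the object values that later conditions read, so the induction must carry the joint invariant that both the values \emph{and} the commit/abort/fail verdicts agree at every prefix of $\pi$. Threading the deadlock-freedom guarantee---so that a global serialization $\pi$ exists at all---through this joint induction is the delicate part; once the invariant is set up correctly, each inductive step reduces to the determinism of a single transaction's execution on identical inputs.
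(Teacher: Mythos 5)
Your overall skeleton matches the paper's proof: SB Agreement pins down identical per-instance logs at state \( S \), Lemma~\ref{lemma:abort} gives identical abort decisions, hence an identical set of executed transactions, and determinism of lock-based execution then yields identical per-key values. The paper stops essentially at that level of detail (it simply asserts that lock acquisition and execution ``follow the same rules, ensuring that the execution is deterministic''), whereas you attempt to prove that determinism claim via a serializability argument. That extra step is commendable, but it is exactly where your proposal has a genuine gap.

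You claim that a pairwise ordering inconsistency --- \( tx \prec_i tx' \) and \( tx' \prec_j tx \) --- is ``the only way the per-object orders could fail to extend to a global serialization,'' and you rule it out via Lemmas~\ref{lemma:deadlockgroup} and~\ref{lemma:abort}. But pairwise consistency does not imply that the union of the per-instance orders is acyclic. Consider three cross-instance transactions with \( tx_1 \prec_i tx_2 \), \( tx_2 \prec_j tx_3 \), and \( tx_3 \prec_k tx_1 \), where each pair shares objects in exactly one instance: every pair is consistently ordered, yet the conflict relation contains a 3-cycle, so no serialization \( \pi \) exists and your induction has nothing to walk along. You also cannot appeal to the deadlock machinery to exclude such cycles, because the paper's deadlock definition in Lemma~\ref{lemma:deadlockgroup} and its \( \mathsf{findOrderingConflicts} \) procedure are themselves pairwise and would not flag this configuration. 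The repair is to argue from the lock discipline rather than from deadlock detection: under the \( \mathsf{firstPending} \) rule, if \( tx \prec_i tx' \) and both execute, then \( tx \) must complete before \( tx' \) can acquire its lock in instance \( i \); hence the actual execution times at any single replica linearize the conflict relation restricted to executed transactions, which forces that relation to be acyclic --- a cycle of any length would yield execution times \( t_1 < t_2 < \cdots < t_1 \). (Equivalently, the members of any such cycle form a wait-for cycle and therefore cannot all commit.) With acyclicity established this way, a serialization exists, both replicas' executions are linearizations of the same acyclic conflict relation over the same transaction set, transactions unrelated in that relation touch disjoint objects and commute, and your joint value/verdict induction then goes through as you outlined.
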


\begin{proof}
We first clarify the inevitability of honest replicas reaching the same system state $S$. During an epoch, honest replicas may temporarily deliver different numbers of blocks from different SB instances due to asynchrony. However, at the end of an epoch, all honest replicas converge to the same vector of delivered maxima. Therefore, if two honest replicas reach the same state $S$, this is not incidental but guaranteed to eventually occur as a result of the epoch mechanism.

Let two honest replicas \( R_1 \) and \( R_2 \) reach the same system state \( S = (sn_0, sn_1, \dots, sn_{m-1}) \), where each \( sn_i \) denotes the maximum sequence number delivered from instance \( i \).  
By the definition of \( S \), both replicas have delivered the exact same sequence of blocks from all SB instances.
Each transaction \( tx \) in these blocks is independently processed by the replicas to determine whether it should be executed or aborted.  
By Lemma~\ref{lemma:abort}, all honest replicas make consistent abort decisions for any given \( tx \).  
Thus, all honest replicas execute the same set of transactions.
For each executed transaction \( tx \), its internal DAG determines a topological order of operations, which is interpreted identically by all replicas.  Moreover, lock acquisition and execution follow the same rules, ensuring that the execution is deterministic.  
Therefore, both \( R_1 \) and \( R_2 \) apply the same updates to each key in the same order.  
It follows that the final state reached at \( S \) is identical across \( R_1 \) and \( R_2 \), and in particular, all keys have the same values.
\end{proof}

\begin{lemma} \label{lemma:confirmed} 
 If a correct client broadcasts a transaction \( tx \),  all honest replicas will eventually confirm \( tx \).
\end{lemma}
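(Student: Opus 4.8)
The plan is to show that confirmation is exactly the event that every instance touching $tx$ delivers a block containing $tx$, and then to establish that each such delivery is guaranteed to occur---simultaneously within one epoch---by combining transaction dissemination, the liveness of each underlying SB instance, and the failure detector. Concretely, let $I=\{\mathsf{assign}(v.o):v\in tx.V\}$ be the set of instances to which $tx$ is assigned. By the delivery logic of Algorithm~\ref{algorithm:hydramain} (Lines 18--26), $tx$ is confirmed at a replica precisely when, for every $i\in I$, that replica has executed $\mathsf{sb\text{-}deliver}$ on a block in instance $i$ containing $tx$. So the goal reduces to proving that every honest replica eventually observes such a delivery in all of $I$ within the same epoch.

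First I would handle dissemination. Since the client is correct, it sends $tx$ to all replicas; after \textsf{GST}, by the reliable authenticated links and the $\Delta$ bound, every honest replica receives $tx$, passes $\mathsf{validateTx}$ (the format and owner signatures are valid because the client is correct), and pushes $tx$ into $bucket_i$ for each $i\in I$ (Lines 11--16). Hence $tx$ resides in the bucket of every honest replica for each relevant instance, and in particular in the bucket of any honest leader of those instances. Next I would argue per-instance inclusion: fix $i\in I$. By the failure-detector mechanism, a faulty or silent leader of instance $i$ is eventually detected and replaced, so after \textsf{GST} instance $i$ eventually operates under an honest leader. An honest leader holds $tx$ in $bucket_i$ and, by the broadcast loop (Lines 3--7), repeatedly pulls the oldest pending transactions; since the bucket is append-only and pulling is oldest-first with a bounded batch, $tx$ cannot be starved and is triggered in some $\langle\mathsf{sb\text{-}broadcast}\mid b\rangle$ with $tx\in b$ after finitely many sequence numbers. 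By the Termination and Agreement of SB, all honest replicas then $\mathsf{sb\text{-}deliver}$ this identical block and mark the vertices of $tx$ assigned to $i$ as delivered.

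The hard part will be stitching these per-instance guarantees into a single epoch. Because $tx$ is confirmed only when all instances in $I$ deliver it, and because an unconfirmed cross-instance transaction is aborted at the end of an epoch (Case 1 of Lemma~\ref{lemma:abort}) and re-added to its buckets (Algorithm~\ref{algorithm:deadlock}, Line 6), a run in which different instances include $tx$ in different epochs never yields confirmation. To close this gap I would invoke partial synchrony together with the re-add mechanism: $tx$ persists in the buckets across epochs, and after \textsf{GST} there is an epoch in which every instance in $I$ simultaneously runs under a non-omitting (eventually honest) leader for long enough to include and deliver $tx$ before the epoch closes. In that epoch every honest replica marks all vertices of $tx$ delivered and therefore invokes $\mathsf{confirm}(tx)$, which establishes the lemma.

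I expect this simultaneity argument---coordinating the independent liveness of the $|I|$ SB instances with the epoch boundary---to be the main obstacle, and the one most dependent on the precise guarantees of the failure detector and the epoch schedule. A secondary point to nail down carefully is the non-starvation claim for the oldest-first pull under an honest leader, which I would make rigorous by bounding the number of transactions that can precede $tx$ in the append-only bucket at the moment $tx$ is inserted.
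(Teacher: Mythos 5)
Your proposal is correct and follows essentially the same route as the paper's proof: dissemination of $tx$ to honest replicas, per-instance delivery via SB termination (with the failure detector replacing faulty leaders), and the epoch-boundary abort-and-reinsert mechanism to guarantee that $tx$ is eventually delivered by all assigned instances within a single epoch. In fact, you are more explicit than the paper about the one genuinely delicate point---getting all instances in $I$ to deliver $tx$ in the \emph{same} epoch---which the paper handles only informally by asserting that reinjected transactions are prioritized in the next epoch and therefore cannot remain unconfirmed indefinitely.
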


\begin{proof}
For any transaction \( tx \) sent by a correct client, the transaction is sent to at least $f + 1$ replicas, ensuring that at least one honest replica receives it and propagates it to all honest replicas.
Each instance uses an SB protocol with guaranteed termination. Therefore, \( tx \) will eventually be proposed by an honest leader and delivered in all assigned instances.
Once all required instances have delivered \( tx \) within an epoch, it will be confirmed. If \( tx \) fails to be confirmed within an epoch (\eg, due to missing deliveries from some instances), the protocol reinserts it into the buckets. 
Reinjected transactions are placed earlier in the next epoch's processing order and have a great opportunity to be proposed and delivered. If an attempt still fails, the transaction is again reinserted and retried. This repeated reinsertion/ retry process prevents \( tx \) from remaining unconfirmed indefinitely.
Eventually, \( tx \) is delivered in all required instances, and thus confirmed by all honest replicas.
\end{proof}

\begin{lemma} \label{lemma:executed} 
If a transaction \( tx \) is confirmed, it will eventually be executed.
\end{lemma}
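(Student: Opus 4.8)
The plan is to show that a confirmed transaction cannot be delayed forever, so it reaches a terminal outcome at every honest replica. I read ``executed'' to include the case where some vertex fails and the transaction is rolled back (Algorithm~\ref{algorithm:hydraexe}), and I note that a transaction aborted to break a deadlock is re-added to the buckets (Algorithm~\ref{algorithm:deadlock}) and, by Lemma~\ref{lemma:confirmed}, re-confirmed, so ``eventually executed'' can be established across retries. The first step is to isolate the only two ways execution of a confirmed $tx$ can stall in Algorithm~\ref{algorithm:hydraexe}: (i) $tx$ is not yet $\mathsf{firstPending}$ in the log of some instance $i \in \mathsf{instances}(tx)$, \ie\ it is queued behind earlier entries; or (ii) $tx$ is first-pending everywhere but cannot acquire all of its object locks because some object is held by another transaction. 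Apart from these, once $tx$ is first-pending in all its instances and every needed lock is free, the lock--execute--unlock block runs to completion over the finite DAG $tx.V$ and always releases its locks, so it terminates.

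Next I would fix an epoch and work with the finite set $P$ of confirmed transactions pending execution, equipping it with a well-founded progress measure (for instance, the multiset of maximal log positions across instances, or $|P|$ together with the set of held locks). The key structural object is the \emph{wait-for} relation on $P$: an edge $t \to t'$ whenever $t$ is blocked on a lock held by $t'$, or $t$ is queued behind $t'$ in some instance log. Obstacles~(i) and~(ii) both correspond to outgoing edges in this graph, so $tx$ stalls precisely when it lies on a path that cannot terminate. I would then split on whether this finite directed graph is acyclic.

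In the acyclic case there is a sink $t^\star$ with no outgoing wait-for edge: it is first-pending in all its instances and every object it needs is free, so by the first step it executes, releases its locks, and leaves $P$, strictly decreasing the measure. In the cyclic case the cycle is exactly a deadlock, caught when a cross-instance member is confirmed and blocked, triggering $\mathsf{expandDeadlockGroup}$; by Lemma~\ref{lemma:deadlockgroup} all honest replicas build the same group $D$, and by Lemma~\ref{lemma:abort} they abort the same deterministically chosen victims, removing an edge of every cycle and again decreasing the measure. Blockers that are queued ahead of $tx$ but never get confirmed within the epoch are cleared at the epoch boundary, where unconfirmed transactions are aborted and their locks released, so they cannot wedge the queue across epochs. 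Since the measure strictly decreases at each progress event and $P$ is finite, after finitely many steps $tx$ becomes a sink, acquires all its locks, and executes; if instead $tx$ is ever selected as a victim, it is re-added and, by Lemma~\ref{lemma:confirmed}, re-confirmed and retried, so it is executed in some epoch.

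The hard part will be establishing that the two obstacles are \emph{jointly} captured by a single wait-for graph whose cycles coincide with what $\mathsf{expandDeadlockGroup}$ detects, and that aborting the selected victims truly unblocks $tx$'s dependency chain rather than merely reshuffling it. In particular I must rule out livelock: a re-added victim could conflict anew, so I need a monotone argument---\eg\ that deterministic lowest-index victim selection, together with reinjected transactions being processed earlier in the next epoch, prevents the same transaction from being perpetually re-chosen. Reconciling the strict per-instance $\mathsf{firstPending}$ discipline of Algorithm~\ref{algorithm:hydraexe} with the cross-instance lock dependencies, so that the progress measure provably decreases in every case, is where the main care is required.
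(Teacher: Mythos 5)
Your proposal is correct in substance and follows the same skeleton as the paper's proof --- a case split between deadlock-free execution and deadlock resolution via Algorithm~\ref{algorithm:deadlock}, consistency of the group and of the victims from Lemmas~\ref{lemma:deadlockgroup} and~\ref{lemma:abort}, and the abort--reinsert--reconfirm loop closed by Lemma~\ref{lemma:confirmed} --- but you wrap that skeleton in machinery the paper does not use: an explicit wait-for graph that merges lock-waits with queue-order waits, a well-founded progress measure, and a sink argument in the acyclic case. This buys you two things the paper lacks. First, you explicitly treat the obstacle of $tx$ being queued behind earlier entries under the $\mathsf{firstPending}$ discipline (including blockers that are never confirmed and are only cleared at the epoch boundary); the paper's proof discusses only lock conflicts and silently skips this case. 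Second, your framework turns each progress event into a strict decrease of a measure over a finite set, so termination within an epoch becomes a genuine induction rather than a narrative. On the remaining ``hard part'' you flag --- ruling out livelock when a re-added victim conflicts anew --- you should know the paper does not solve it either: it simply asserts that because aborted transactions are reinserted in a deterministic order, the same dependency cycle cannot persist indefinitely. So your proposal is not weaker than the published proof at that point; it is more honest about the same gap, and the monotonicity argument you sketch (deterministic lowest-index victim selection combined with reinjected transactions being processed earlier in the next epoch) is precisely the kind of detail that would be needed to close it rigorously.
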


\begin{proof}
Once a transaction \( tx \) is confirmed, it will be scheduled for execution by all honest replicas.
If \( tx \) is not involved in any deadlock, it will successfully acquire all required object locks and be executed according to its internal DAG dependencies.

If \( tx \) is involved in a deadlock, the system invokes a deadlock detection and resolution mechanism (Algorithm~\ref{algorithm:deadlock}) that ensures eventual termination of all deadlocks.  
This is achieved by identifying consistent deadlock groups and deterministically selecting victim transactions to abort, thereby breaking the cycle.
Note that each epoch contains only a finite number of transactions, and deadlock expansion is performed solely among pending transactions within the current epoch.  
Therefore, the deadlock group is guaranteed to stabilize in a finite number of steps, ensuring that the resolution process always terminates.

Once the deadlock is resolved, each transaction in the group will either acquire the necessary locks and proceed to execution, or be selected as a victim and aborted.
If $tx$ is aborted, the transaction is reinserted into the buckets according to our design. A reinjected transaction must eventually be confirmed again, because any unconfirmed transaction will continue to be rescheduled for processing. Once confirmed, the system will retry its execution.
Since aborted transactions within a deadlock group are reinserted in a deterministic order, the same dependency cycle cannot persist indefinitely; thus, these transactions will not remain in deadlock forever and will eventually be executed successfully.
Hence, every confirmed transaction is guaranteed to eventually be executed.
\end{proof}

\begin{theorem}[Liveness]
If a correct client broadcasts a transaction \( tx \), then the client will eventually receive a response.
\end{theorem}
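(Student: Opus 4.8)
The plan is to chain together the liveness lemmas already established. The theorem states that a correct client's transaction $tx$ eventually receives a response, and the natural proof structure follows the two-stage pipeline: confirmation followed by execution. First I would invoke Lemma~\ref{lemma:confirmed} to establish that $tx$, once broadcast by a correct client, is eventually confirmed by all honest replicas—this handles the propagation from the client to at least $f+1$ replicas and the SB termination guarantee that drives delivery across all assigned instances, including the retry-via-reinsertion argument for transactions that miss confirmation within an epoch.

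Next I would apply Lemma~\ref{lemma:executed} to argue that once $tx$ is confirmed, it is eventually executed. The key point here is that execution means one of two terminal outcomes: either $tx$ acquires all its locks and runs to completion (yielding a \textsc{success} or \textsc{failure} reply in Algorithm~\ref{algorithm:hydraexe}, Lines 14--16), or it is selected as a deadlock victim and aborted. I would then close the loop by observing that in either terminal case, the replica issues a response to the client: a successful or failed execution triggers \textsf{replyToClient}, while an aborted-and-reinserted transaction is re-confirmed and retried, and by the deterministic reinsertion ordering argument in Lemma~\ref{lemma:executed}, the same cycle cannot recur indefinitely, so execution (and hence a reply) eventually occurs.

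The main subtlety—and the step I expect to be the real obstacle—is bridging the gap between ``eventually executed'' and ``the client receives a response,'' given that a confirmed transaction may still be aborted as a deadlock victim and looped back into the buckets. I must be careful that an aborted transaction does not silently vanish: the liveness argument requires that the abort-reinsert-reconfirm-retry loop makes monotone progress. I would lean on the determinism guaranteed by Lemmas~\ref{lemma:deadlockgroup} and~\ref{lemma:abort} (consistent deadlock groups and consistent victim selection across honest replicas) together with the finiteness of transactions per epoch, to argue that each retry strictly reduces the chance of re-entering the same deadlock—the victim-selection rule breaks a specific cycle, and reinsertion at an earlier position changes the relative ordering so the identical conflict cannot persist. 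I would also need to address the partial-synchrony assumption: all of this progress is only guaranteed after \textsf{GST}, when SB termination and timely delivery hold, so the liveness claim is implicitly an eventual (post-\textsf{GST}) guarantee.

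Assembling these pieces, the proof is short: confirmation (Lemma~\ref{lemma:confirmed}) composes with execution (Lemma~\ref{lemma:executed}), and execution in any terminal branch yields a client response, establishing liveness. The writing effort goes almost entirely into making the abort-and-retry termination rigorous rather than into any new machinery.
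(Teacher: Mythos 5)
Your proposal matches the paper's proof: it chains Lemma~\ref{lemma:confirmed} (broadcast implies eventual confirmation) with Lemma~\ref{lemma:executed} (confirmation implies eventual execution) and concludes that the execution result is returned to the client. The additional care you take with the abort--reinsert--retry loop is content the paper places inside Lemma~\ref{lemma:executed} itself, so your argument is the same decomposition, just with that subtlety restated at the theorem level.
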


\begin{proof}
By Lemma~\ref{lemma:confirmed} and~\ref{lemma:executed}, if a correct client broadcasts a transaction \( tx \), then all honest replicas will eventually be executed. 
The execution result will be returned to the client.
Therefore, every transaction broadcast by a correct client is guaranteed to eventually trigger a response.
\end{proof}

\section{Additional Experiments}
\begin{figure}[t]
    \vspace{-4mm}
	\centering
    \begin{subfigure}[t]{0.24\textwidth}
	\centering
        \includegraphics[width=1.05\linewidth]{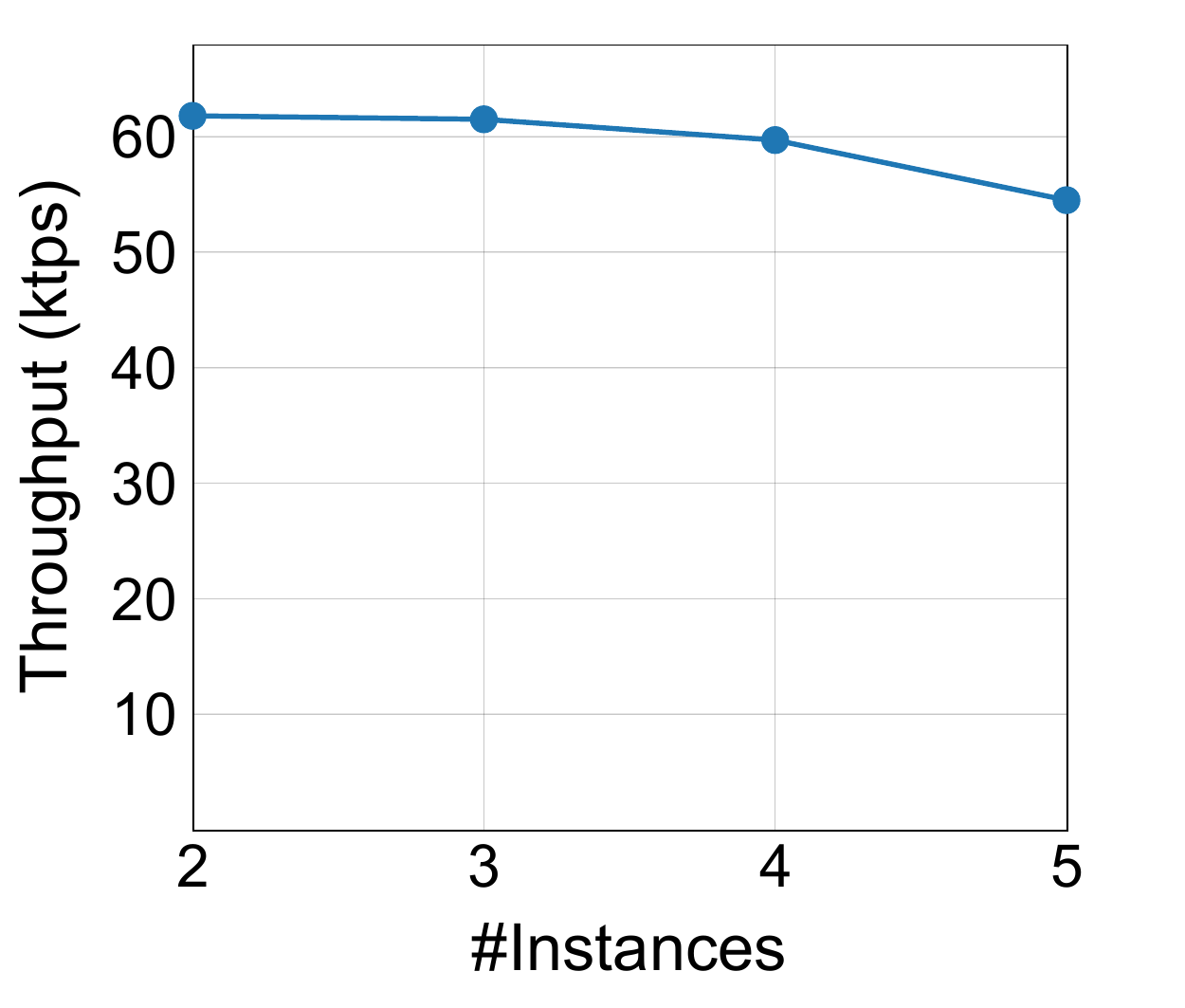}
        \caption{{Throughput}}
    \end{subfigure}
    \hfill
    \begin{subfigure}[t]{0.24\textwidth}
	\centering
        \includegraphics[width=1.05\linewidth]{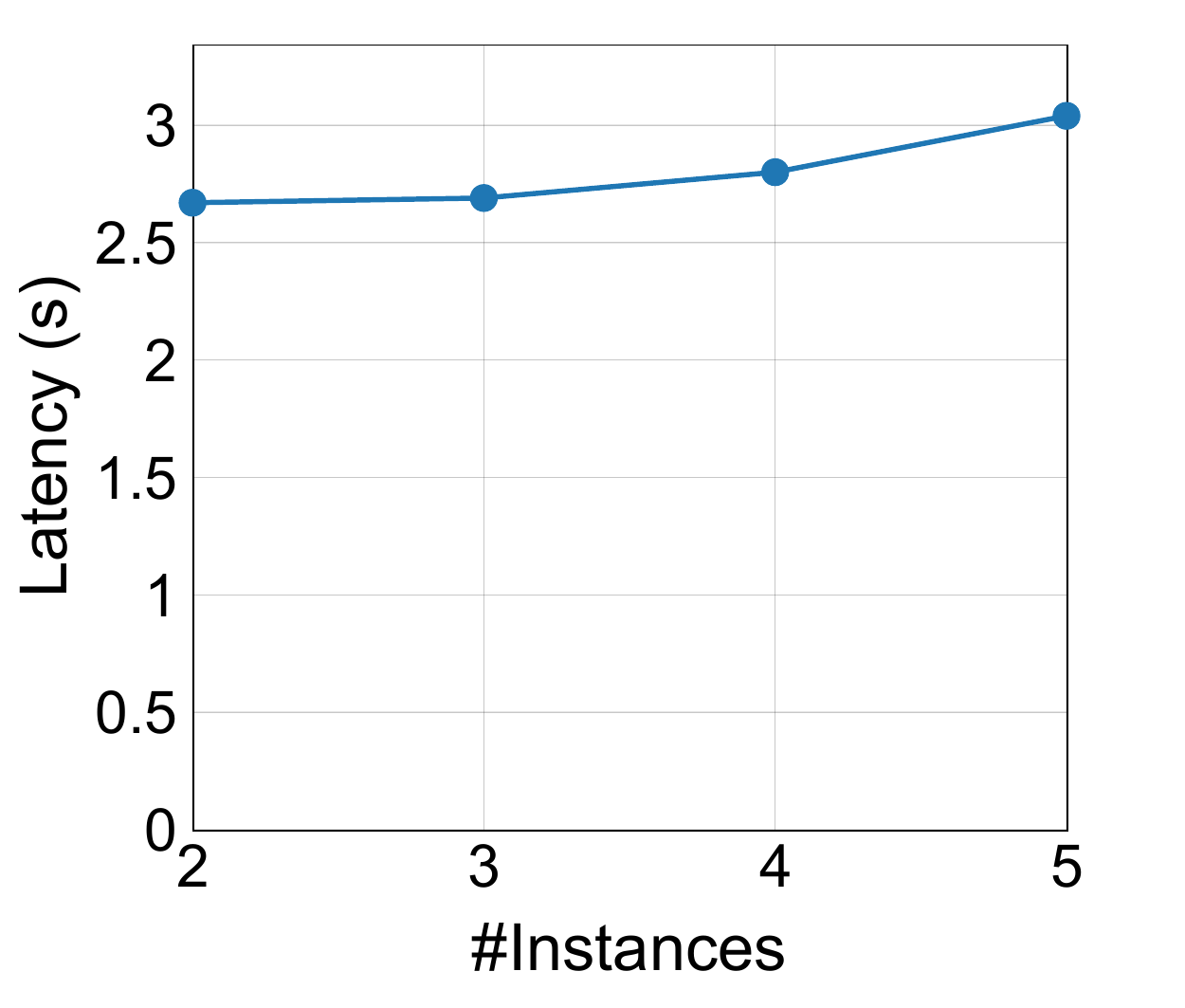}
        \caption{{Latency}}
    \end{subfigure}
    
    \caption{Throughput and latency of \sysname\ with increasing cross-instance complexity.}
	\label{fig:crossinstances}
    \vspace{-4mm}
\end{figure}

\bheading{Cross-instance transactions spanning more instances.} 
We evaluate scenarios in which each cross-instance transaction spans 2--5 instances, representing increasingly complex coordination patterns.

As shown in \figref{fig:crossinstances}, transactions spanning more instances lead to a moderate reduction in throughput and an increase in latency compared to transactions involving fewer instances, but the degradation remains bounded.
Specifically, as the number of involved instances increases from 2 to 5, throughput decreases approximately 11.8\%.
Over the same range, latency increases about 13.9\%.

These results indicate that Hydra can efficiently support cross-instance transactions involving a larger number of instances, and that increasing cross-instance transaction complexity does not introduce significant additional performance degradation.

\section{Comparison with Sharding}\label{app:sharding}
This appendix clarifies the relationship between \sysname and sharding designs. 

\subsection{Cross-Shard Transaction Handling}
Existing sharding designs fundamentally rely on coordination among the involved shards to guarantee atomicity and consistency of cross‑shard transactions. We next discuss several representative works along these directions.

\bheading{Client‑driven two‑phase commit (2PC).} 
This class originates from systems such as OmniLedger~\cite{omniledger}. In these designs, the client orchestrates a two‑phase commit across shards. In the prepare phase, the client sends the transaction to all input shards; after each shard verifies the transaction and locks the necessary state, it returns a signed proof. In the commit phase, the client aggregates these proofs and sends a combined proof to the output shards to finalize the transaction. 

\bheading{Committee‑driven two‑phase commit.}
In this category, shards coordinate among themselves without relying on clients for orchestration~\cite{al2017chainspace, hellings2021byshard,liu2025realizing}. For example, in Chainspace, where committees of replicas in each shard perform BFT consensus locally and then exchange signed messages among committees to drive prepare and commit decisions. Although this removes client dependency, it still requires explicit cross‑shard communication and multiple coordination rounds.

\bheading{Intersection replicas.}
Some protocols introduce structural overlap in the validator sets of shards. Approaches such as Pyramid~\cite{hong2021pyramid} select replicas that participate in multiple shards simultaneously, enabling them to act as natural bridges for cross‑shard transactions. These intersection replicas can validate and confirm cross‑shard state changes within a single consensus epoch, reducing the need for explicit message exchange across shards. This technique trades off increased validator complexity and careful committee assignment for lower coordination overhead.

\bheading{Deterministic ordering.} 
Several recent high‑performance blockchain designs depart from traditional sharded BFT architectures by avoiding explicit network or committee partitioning, and instead rely on deterministic ordering to enable parallel execution~\cite{gelashvili2023block, blackshear2023sui}. As a result, these designs are conceptually closer to the Multi-BFT paradigm. In these systems, transactions that may conflict are assigned a pre-determined global order. By eliminating dynamic coordination during execution, this approach substantially reduces the number of communication rounds required. However, it still incurs overhead for establishing and distributing the global order.

\begin{table}[t]
\centering
\caption{Cross-shard / cross-instance transaction coordination overhead in representative BFT systems. $n$: number of replicas, $k$: number of shards/instances.}
\label{tab:cross-shard-complexity}
\begin{tabular}{l c}
\toprule
\textbf{System} & \textbf{Coordination messages}\\
\midrule
OmniLedger~\cite{omniledger} & $O(k)$  \\
Chainspace~\cite{al2017chainspace} & $O(k^2) $  \\
Pyramid~\cite{hong2021pyramid}  & $O(n)$  \\
\textbf{\sysname} & $none$  \\
\bottomrule
\end{tabular}
\end{table}

\subsection{Coordination Overhead Comparison}
To compare different designs, we consider a transaction that spans $k$ shards/instances and estimate the number of extra coordination messages of different protocols in Table~\ref{tab:cross-shard-complexity}. 

\begin{itemize}
    \item In OmniLedger~\cite{omniledger}, the client collects Proof-of-Acceptance from all input shards and submits the transaction to the output shard. Each involved shard lerader sends at least one message to the client, resulting in a total coordination message complexity of $O(k)$.
    \item During the prepare phase of Chainsapce~\cite{al2017chainspace}, each shard leader needs to validate with the other $k-1$ shards, and then subsequently submits the collected lock proofs to the output shards. In the worst case, the total message complexity is $O(k^2)$, which comes from shard-to-shard verification.
    \item Pyramid~\cite{hong2021pyramid} selects intersection replicas that belong to multiple shards to process cross-shard transactions. Each intersection node participates simultaneously in the BFT consensus of the relevant shards,and then synchronize the cross-shard commit within each shard. This yields $O((n/k) \cdot k) = O(n)$ messages, roughly linear in total system size.
    \item \sysname does not require any cross-instance coordination messages. Cross-instance transactions are executed correctly using locks and deadlock-resolution mechanisms without sending additional messages. Therefore, the coordination messages can be considered as none.
\end{itemize}

\end{document}